\documentclass[journal]{IEEEtran}

\usepackage[latin1]{inputenc}
\usepackage{graphicx}
\usepackage{amssymb,amsmath,amsfonts,amsthm}
\usepackage{algorithm}
\usepackage{cite}
\usepackage{float}
\usepackage{graphics}
\usepackage{subfigure}
\usepackage{xspace}
\usepackage[usenames,dvipsnames]{color}
\usepackage{amssymb, epsfig}

\usepackage[noend]{algorithmic}

\usepackage{bbm}
\usepackage{subeqnarray}
\usepackage{amsmath}


\newcommand{\rank}{\operatorname{rank}}

\newcommand{\diag}{\operatorname{diag}}
\newtheorem{theorem}{Theorem}
 
\newtheorem{proposition}{Proposition}

\begin{document}

\title{Spatial Domain Simultaneous Information and Power Transfer for MIMO Channels}

\author{Stelios Timotheou, Member, IEEE, Ioannis Krikidis, Senior Member, IEEE,
Sotiris Karachontzitis, Student Member IEEE, and Kostas Berberidis, Senior Member, IEEE
\thanks{S. Timotheou and I. Krikidis are with KIOS Research Center for Intelligent Systems and Networks, University of Cyprus, Cyprus (E-mail: \text{\{timotheou.stelios, krikidis\}@ucy.ac.cy}).}
\thanks{S. Karachontzitis and K. Berberidis are with the Computer Engineering and Informatics Department, University of Patras, Greece (E-mails: \{karaxodg, berberid\}@ceid.upatras.gr) }
\thanks{Part of this work was supported by the Research Promotion Foundation, Cyprus under the project KOYLTOYRA/BP-NE/0613/04 ``Full-Duplex Radio: Modeling, Analysis and Design (FD-RD)''.}
\thanks{This work was supported in part by the EU and national funds via the National Strategic Reference Framework (NSRF)--Research Funding Program ``Thales'' (project ENDECON) and in part by the University of Patras, Greece.}
\thanks{Preliminary results of this work have been presented in \cite{greencomm13}.}
}
\maketitle

\begin{abstract}
In this paper, we theoretically investigate a new technique for simultaneous information and power transfer (SWIPT) in multiple-input multiple-output (MIMO) point-to-point with radio frequency energy harvesting capabilities.  The proposed technique exploits the spatial decomposition of the MIMO channel and uses the eigenchannels either to convey information or to transfer energy. In order to generalize our study, we consider channel estimation error in the decomposition process and the interference between the eigenchannels. An optimization problem that minimizes the total transmitted power subject to maximum power per eigenchannel, information and 
energy  constraints is formulated as a mixed-integer nonlinear program and solved to optimality using mixed-integer second-order cone programming. A near-optimal mixed-integer linear programming solution is also developed with robust computational performance. A polynomial complexity algorithm is further proposed for the optimal solution of the problem when no maximum power per eigenchannel constraints are imposed. In addition, a low polynomial complexity algorithm is developed for the power allocation problem with a given eigenchannel assignment, as well as a low-complexity heuristic for solving the eigenchannel assignment problem. 
\end{abstract}

\begin{keywords}
RF energy harvesting, SWIPT, MIMO channel, singular value decomposition, channel estimation error, optimization, MISOCP, MILP.
\end{keywords}

\vspace{-0.4cm}
\section{Introduction}

\IEEEPARstart{T}{he} integration of renewable energy sources into communication networks is a hot research topic. It provides significant energy gains and is an efficient green communication solution for the expected future data traffic increase. Traditional renewable energy sources such as solar energy and wind depend on the weather conditions and are characterized by high instability; the integration of these energy sources requires a fundamental re-design of communications systems in all levels of protocol stack in order to ensure robustness and reliability \cite{SUD}.  Several works consider conventional renewable energy resources and study optimal resource allocation techniques for different objective functions and network topologies e.g., \cite{EPH,OZE,GUR,BLA,TUT}. However, the intermittent and unpredictable nature of these energy sources makes energy harvesting (EH) critical for applications where quality-of-service (QoS) is of paramount importance. On the other hand, recently there is a lot of interest to use electromagnetic radiation as a potential renewable energy resource. The key idea of this concept is that electromagnetic (EM) waves convey energy, which can be converted to DC-voltage by using specific rectenna circuits \cite{MON,VOL}. 

Most of the work on radio-frequency (RF) energy transfer concerns the design of rectenna circuits in different frequency bands, which is a fundamental aspect towards the development of this technology \cite{POP2,SUN,NIN}. From an information theoretic standpoint, evaluating the channel capacity for different network configurations,  when RF EH requirements characterize the receiver nodes, is a challenging problem. The work in \cite{VAR1} discusses the joint transfer of information and energy for a single-input single-output (SISO) channel and is extended in \cite{VAR2} for a set of parallel point-to-point channels; the authors in \cite{SIM} study the capacity for  two baseline multi-user systems with RF energy constraints, namely multiple access and multihop channels. However, information theoretic studies assume that the receivers are able to decode information and harvest energy from the same RF signal without limitations. Although this assumption provides some useful theoretical bounds, it cannot be supported by the current practical implementations.

In order to satisfy the above practical limitation, the work in \cite{RUI1} deals with the beamforming design for a basic broadcast multiple-input multiple-output (MIMO) channel, where the source conveys information to one receiver and transfers energy to the other one. In that work, the authors introduce two main practical techniques for simultaneous information and power transfer (SWIPT): a) ``time switching'' (TS), where the receiver switches between decoding information and harvesting energy and b) ``power splitting'' (PS), where the receiver splits the received RF signal in two parts for decoding information and harvesting energy, respectively. This work is extended in \cite{XIA} for imperfect channel information at the transmitter using robust optimization. 

The employment of the above two practical approaches in different fundamental network structures, is a hot research topic and several recent works appear in the literature. The works in \cite{RUI2,RUI3,KRI,CFL} focus on the TS technique for different network topologies. Specifically, in \cite{RUI2} the authors investigate the optimal switching strategy for a SISO channel in order to achieve various trade-offs between wireless information transfer and EH with/without channel state information (CSI) knowledge at the transmitter. The work in \cite{RUI3} presents a time division multiple access (TDMA)-based multiuser broadcast channel where the downlink (broadcast channel) is used for EH while the uplink for conveying information from the users to the access point. In \cite{KRI}, the authors apply the TS technique for a basic relay channel and investigate the optimal switching rule. A TS switching architecture is proposed in \cite{CFL} for a multiple-input single-output (MISO) system, where the single-antenna receiver uses its harvested energy to feed a quantized CSI.  On the other hand, in \cite{NAS} the authors study the performance of a cooperative system, where the relay is powered by employing a PS technique on the received signal. The work in \cite{TIM} studies the optimal transmitted power for a MISO interference channel with PS where the destinations have both information/energy constraints. The TS and PS techniques have also been exploited in cognitive radio networks to achieve joint information and energy cooperation between a primary and a secondary system, resulting in improved performance for both systems due to the added RF EH capabilities \cite{ganCR}. Recently, the authors in \cite{XUNZH} take into account the energy consumption of the active electronic circuits (e.g., mixers) and they propose an integrated receiver architecture, where PS is employed at the baseband domain. A recent overview of SWIPT systems with a particular focus on the hardware realization of rectenna circuits and practical techniques that achieve SWIPT in the domains of time, power and antennas is provided in \cite{magKrik}.

Although most of the previous studies refer to specific and fix network topologies, more recent works study TS/PS for large scale networks with location randomness. In \cite{LEE}, the authors study the interaction between primary and cognitive radio networks,  where cognitive radio nodes can harvest energy from the primary transmissions, by modeling node locations as Poisson point processes. A cooperative network with multiple transmitter-receiver pairs and a single EH relay is studied in \cite{DIN2} by taking the spatial randomness of user locations into consideration. In \cite{KRI5}, the authors study the performance of a large scale ad hoc network with/without relaying where receivers have PS capabilities.

In contrast to the  conventional TS and PS approaches, we propose a new technique for SWIPT in the spatial domain for a basic point-to-point MIMO channel. In this work, spatial domain does not refer to the antenna elements \cite{RUI2,KRI6} but mainly on the spatial degrees of freedom of the channel.  Based on the singular value decomposition (SVD) of the MIMO channel, the communication link is transformed to parallel  channels that can convey either information data or energy; this binary allocation is in respect to the current practical limitations. In order to make our analysis more general (and practical), we assume an imperfect channel estimation that affects the orthogonality of the eigenchannels (the imperfect channel knowledge generates an interference to the eigenchannels). We study the minimization of the transmitted power when the receiver is characterized by maximum power per eigenchannel, information rate and power transfer constraints which is a mixed-integer nonlinear optimization problem and propose several solution methods. It is worth noting that the main purpose of this work is to introduce a new technique for SWIPT in MIMO systems; this technique is studied from a theoretical standpoint and practical implementation is beyond the scope of this paper.

More specifically, the contributions of this paper are:
\begin{itemize}
\item Construction of mixed-integer second-order cone programing (MISOCP) and mixed-integer linear programming (MILP) formulations to the general problem that can be solved using standard optimization tools to provide optimal and near-optimal solutions, respectively.
\item Development of an optimal polynomial complexity algorithm for the solution of the special problem of having no maximum power per eigenchannel constraints. A low complexity polynomial algorithm is also developed for the general problem with excellent performance.
\item Derivation of a waterfilling-like procedure for the solution of the optimal power allocation problem with fixed eigenchannel assignment, of low polynomial complexity. 
\end{itemize}

\noindent \underline{Notation:} Upper case and lower case bold symbols denote matrices and vectors, respectively, while calligraphic letters denote sets.
$\diag({\mathbf{x}})$ represents a diagonal matrix with the vector $\mathbf{x}$ in the main diagonal, $\det(\cdot)$ denotes determinant, $\mathbf{I}_n$ is
the identity matrix of order $n$, $\log(\cdot)$ denotes the
logarithm of base $2$, $\mathbb{E}[\cdot]$ represents the
expectation operator, the superscripts $H$ and $*$ denote the Hermitian
transpose and conjugate operations, respectively. $\left|\mathcal{A}\right|$ denotes the cardinality of set $\mathcal{A}$, while $\mathcal{A} \setminus \mathcal{B}$ is the difference between sets $\mathcal{A}$ and $\mathcal{B}$.

The paper is organized as follows.  Section II presents the system model, introduces the spatial domain EH and the associated optimization problem. Section III deals with the optimal solution of (a) the general problem considered using MISOCP and MILP formulations and (b) the special case with no maximum power per eigenchannel. Section IV develops an optimal procedure for the power allocation problem for a given eigenchannel assignment. Section V introduces a low-complexity suboptimal heuristic for the considered optimization problem. Section VI discusses the numerical performance of the proposed schemes and Section VII concludes the paper. 

\section{System model \& problem formulation}

We assume a simple point-to-point MIMO model consisting of one source with $N_T$ transmit antennas and one destination with $N_R$ receive antennas. The source is connected to a constant power supply while the destination has RF transfer capabilities and can harvest energy from the received electromagnetic radiation.  We consider a flat fading spatially uncorrelated Rayleigh MIMO channel where $\mathbf{H}\in \mathbb{C}^{N_R\times N_T}$ denotes the channel matrix. The channel remains constant during one transmission time-slot and changes independently from one slot to the next. The entries of $\mathbf{H}$ are assumed to be independent, zero-mean
circularly symmetric complex Gaussian (ZMCSCG) random
variables with unit variance (which ensures $\rank(\mathbf{H}) = N=\min\{N_T,N_R\}$). The received signal is described by 
\begin{align}
\mathbf{y}=\mathbf{H}\mathbf{x}+\mathbf{n},
\end{align}

\noindent where $\mathbf{x}\in \mathbb{C}^{N_T\times 1}$ denotes the transmitted signal with $\mathbb{E}[\mathbf{x}\mathbf{x}^H]=\mathbf{Q}$ and $\mathbf{n}\in \mathbb{C}^{N_R\times 1}$ represents the noise vector having ZMCSCG entries of unit variance.  We assume that the channel matrix is subject to a channel estimation error and therefore is
{\it imperfectly} known at both the transmitter and the receiver with a MMSE estimation
error $\mathbf{E}\triangleq \mathbf{H}-\mathbf{\widehat{H}}$, where the entries of $\mathbf{E}$ are ZMCSCG with variance
$\sigma_{\epsilon}^2$ and the entries of
$\mathbf{\widehat{H}}$ are also independent and identically distributed (i.i.d.) ZMCSCG with variance
$1-\sigma_{\epsilon}^2$ \cite{YOO}.  The channel estimation is performed at the destination via a downlink pilot sequence and is communicated to the source by using an uplink feedback channel; the channel estimation process is beyond the scope of this paper. We note that the parameter $\sigma_{\epsilon}^2$ captures the quality of the channel estimation and is assumed to be known to both the transmitter and the receiver. The work in \cite{YOO} provides some examples for $\sigma_{\epsilon}^2$ related to different channel estimation schemes and channel statistics; in this paper, we assume that $\sigma_{\epsilon}^2$ is a constant without further implications. Based on the estimated channel $\mathbf{\widehat{H}}$, a lower bound of the instantaneous mutual information is given by \cite{YOO}
\begin{align}
I(\mathbf{x};\mathbf{y})=\log \det \left(\mathbf{I}_{N_R}+\frac{1}{1+\sigma_{\epsilon}^2 P}\mathbf{\widehat{H}}\mathbf{Q}\mathbf{\widehat{H}}^H \right).
\end{align}

\noindent It is worth noting that for $\sigma_{\epsilon}^2=0$ (perfect channel knowledge), the above expression gives the exact mutual information of the MIMO channel. By using the SVD of the $\mathbf{\widehat{H}}$ channel, it has been proven in \cite{YOO} that the lower bound of the mutual information is maximized when $\mathbf{Q}=\diag(P_1,\ldots,P_{N_T})$ and takes the form 
\begin{align}
I(\mathbf{x};\mathbf{y})=\sum_{i\in\mathcal{N}}\log \left(1+\frac{P_i \lambda_i}{1+\sigma_{\epsilon}^2P}  \right), \label{rate}
\end{align}

\noindent where $\lambda_i$, $i\in\mathcal{N}=\{1,...,N\}$, is the $i$-th eigenvalue of $\mathbf{\widehat{H}}\mathbf{\widehat{H}}^H$, $P_i$ is the power allocated to the $i$-th eigenchannel and $P=\sum_{i}P_i$. It is further assumed that $\lambda_1\geq \lambda_2\geq ...\geq\lambda_N$. 

\begin{figure}[t]
\centering
\includegraphics[keepaspectratio,width=\columnwidth]{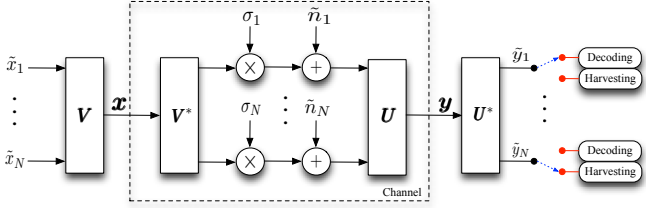}
\vspace{-0.5cm}
\caption{The SVD transformation of the MIMO channel into $N$ parallel AWGN channels.}\label{model_SVD}
\end{figure}

This bound can be achieved using SVD on the estimated matrix $\mathbf{\widehat{H}}$ and by appropriate precoding and receiver shaping at the source and the destination, respectively. Fig. \ref{model_SVD} schematically presents the system model and the transformation of the MIMO channel to $N$ eigenchannels for potential transfer of information/energy. We show the SVD coordinate transformations when the matrix $\widehat{\mathbf{H}}$ has a SVD equal to $\widehat{\mathbf{H}}=\mathbf{U}\mathbf{\Sigma}\mathbf{V}^*$ , where $\mathbf{U}\in \mathbb{C}^{N_R\times N_R}$ and $\mathbf{V}\in \mathbb{C}^{N_T\times N_T}$ are unitary matrices and $\mathbf{\Sigma}\in\mathbb{C}^{N_R\times N_T}$ is a diagonal matrix with elements the singular values of the matrix $\widehat{\mathbf{H}}$ i.e., $\sigma_i$, $i\in\mathcal{N}$ and $\sigma_i=\sqrt{\lambda_i}$. Based on the SVD, the channel input $\mathbf{x}$ comes from the transformation of the original sequence $\tilde{\mathbf{x}}$ with the matrix $\mathbf{V}$ while the received sequence $\mathbf{y}$ is transformed to the sequence $\tilde{\mathbf{y}}$ by using the matrix $\mathbf{U}$. In the ideal case where $\sigma_{\epsilon}^2=0$, the MIMO channel is decomposed to $N$ parallel  SISO channels with 
\begin{align}
\tilde{y}_i=\sigma_i \tilde{x}_i+\tilde{n}_i,\;\;\;\text{with}\;\;i\in\mathcal{N},
\end{align}     
where $\tilde{n}_i$ denotes the AWGN for the $i$-th parallel channel and has the same distribution with $n_i$ (due to the unitary transformation) in reception.  

The expression in \eqref{rate} shows that the imperfect channel estimation generates an interference to the $N$ parallel SISO channels (eigenchannels). We assume that the destination is characterized by both information rate and RF EH requirements; this means that for the duration of one transmission  the destination requires an information rate $C_I$ and energy $\bar{C}_{EH}=C_{EH}/\delta$ as input to its rectenna circuits (the amount of energy that can be stored, $C_{EH}$, depends on the energy conversion efficiency of the specific implementation, $\delta$) e.g., \cite{TIM,SHI}. For simplicity and without loss of generality we assume in the mathematical analysis that  $\delta=1$.

\subsection{SWIPT optimization problem}

The proposed scheme exploits the SVD structure of the MIMO channel and achieves SWIPT in the spatial domain. More specifically, the transformation of the MIMO channel to $N$ parallel SISO channels allows the simultaneous transfer of data traffic and RF energy by using an eigenchannel either to convey information or energy. An eigenchannel cannot be used to convey both information and energy; this limitation refers to practical constraints and is inline with the other approaches proposed in the literature, e.g. power splitting. Based on the diagram in Fig. \ref{model_SVD}, at the output of each eigenchannel there is a switcher which drives the channel output either to the conventional decoding circuit or to the rectification circuit. During each transmission an appropriate optimization problem is solved, which determines the usage of each antenna and a switching mechanism selects the appropriate circuits.

In this paper, we focus on the minimization of the transmitted power given that both information and energy constraints are satisfied.  Based on the notation considered, the proposed technique introduces the following optimization problem
\begin{subequations}
\label{eq:ProblemGeneral}
\begin{align}
\min & ~P=\sum_{i\in\mathcal{N}} P_i \label{eq:ProblemGenerala}\\
\text{s.t.}\;\;&\sum_{i\in\mathcal{N}} \log \left(1+\frac{\pi_i P_i \lambda_i}{1+\sigma_{\epsilon}^2 P}  \right)\geq C_I, \label{eq:ProblemGeneralb}\\
&\sum_{i\in\mathcal{N}} (1-\pi_i)\big(P_i \lambda_i+\sigma_{\epsilon}^2P \big)\geq C_{EH}, \label{eq:ProblemGeneralc} \\
&0 \leq P_i \leq P_{max},~\pi_i \in \{0,1 \}, ~i\in\mathcal{N},\label{eq:ProblemGenerale}
\end{align}
\end{subequations}
where $P_{max}$ indicates the maximum power that can be used in each eigenchannel, while binary variable $\pi_i$ indicates whether the $i$-th eigenchannel is used for information transfer ($\pi_i=1$) or energy transfer ($\pi_i=0$). Note that terms $\log \left(1+\frac{\pi_i P_i \lambda_i}{1+\sigma_{\epsilon}^2 P}\right)$ are equal to the more intuitive representation $\pi_i \log \left(1+\frac{P_i \lambda_i}{1+\sigma_{\epsilon}^2 P} \right)$ when $\pi_i\in\{0,1\}$, $i\in\mathcal{N}$. This mathematical program involves binary and continuous variables, as well as nonlinear functions; hence it belongs to the class of mixed-integer nonlinear optimization problems, which are very hard to solve in general.

The incorporation of $\sigma_{\epsilon}^2$ into \eqref{eq:ProblemGeneral} makes the problem more realistic and general, with interesting implication effects. On the one hand, increasing $\sigma_{\epsilon}^2$ reduces the achieved information rate in \eqref{eq:ProblemGeneralb}. On the other hand, the added cross-eigenchannel interference becomes a useful source of energy transfer in \eqref{eq:ProblemGeneralc}, which implies that even zero-power EH eigenchannels can contribute to the satisfaction of the EH constraint. From an optimization point of view, the problem becomes more difficult to tackle due to the presence of term $\sigma_{\epsilon}^2P$ in the denominator of the logarithmic terms in \eqref{eq:ProblemGeneralb}. Hence, the developed solution methodologies are quite general and can be applied not only for the solution of the special case with $\sigma_{\epsilon}^2=0$, but also for other problems with similar information rate expressions.

We note that the energy harvested due to the receiver noise is negligible. 
In addition, the optimization problem can be solved at the source node and then control bits can be used in order to inform the receiver about the content of each eigenchannel. 

\subsection{Implementation issues}

The main challenge to implement the proposed SWIPT technique is to perform the required signal processing, including the eigenchannel decomposition, with ultra-low power consumption. This requires the use of analog passive electronic elements (e.g., Schottky diode), which is in line with current implementations of conventional wireless power transfer \cite{borges14,kim14}. In respect to these requirements, analog eigenmode beamforming with passive electronic elements (180-degree hybrid couplers) is a promising technology recently proposed  \cite{murata}, that could be used for the practical implementation of the proposed technique. This implementation technology achieves diagonalization of the MIMO channel matrix without destroying the energy content of the received signal.

It is worth noting that although digital beamforming dominates current communication systems, analog beamforming is an emerging technology with particular impact on future millimeter-wave (mmWave)- based systems \cite{roh,han}. In these systems, the transceivers are composed of a large number of antennas and the conventional baseband/digital processing requires a large number of analog-to-digital converters and RF chains, which results in a high complexity and power consumption. Also note that SWIPT is only studied theoretically and not used in current devices; proof-of-concept implementations are still under investigation.

\vspace{-0.3cm}
\section{Optimal Eigenchannel Assignment and Power Allocation}

The considered problem is nonlinear and combinatorial in nature, and hence very hard to solve. Moreover, standard MILP/MISOCP solvers are not directly applicable for its solution due to the presence of nonlinear terms. In this section, two formulations are proposed for the optimal/near-optimal solution of problem \eqref{eq:ProblemGeneral} using standard solvers. The first, transforms the problem into an equivalent MISOCP formulation that provides the optimal solution, while the second approximates the logarithmic functions using piecewise linear approximation (PLA) to arrive at a near-optimal (in most cases optimal) result by solving a series of MILP problems. These approaches have exponential complexity and are only used for benchmarking purposes; the reason for discussing both is that the MISOCP guarantees optimality, while the MILP has more robust performance (see Section \ref{sec:nume}). In addition, we provide a polynomial algorithm that optimally solves the problem for the special case of $P_{max}=\infty$.

\vspace{-0.3cm}
\subsection{Optimal MISOCP solution}
\label{sec:MISOCP}

Towards the optimal solution of \eqref{eq:ProblemGeneral}, we write it as:
\begin{subequations}
\label{eq:SOCP1}
\begin{align}
\min & ~P = \sum_{i\in\mathcal{N}} P_{i}\label{eq:SOCP1a}\\
\text{s.t.} & \prod_{i\in\mathcal{N}}\left(\frac{1+\sigma_{\epsilon}^{2}P+\pi_{i}P_{i}\lambda_{i}}{1+\sigma_{\epsilon}^{2}P}\right)\ge2^{C_{I}}, \label{eq:SOCP1b}\\
 & \sum_{i\in\mathcal{N}} (P_{i} - \pi_{i}P_{i})\lambda_{i} + N\sigma_{\epsilon}^{2}P-\sigma_{\epsilon}^{2}\sum_{i\in\mathcal{N}}\pi_{i}P\ge C_{EH}, \label{eq:SOCP1c}\\
 & 0\le P_{i}\leq P_{max},\,\,\,\pi_{i}\in\left\{ 0,1\right\},~i\in\mathcal{N}. \label{eq:SOCP1e}
\end{align}
\end{subequations}

Formulation \eqref{eq:SOCP1} is not MISOCP due to the products of variables, $\pi_{i}P_{i}$
and $\pi_{i}P$, $i\in\mathcal{N}$, in (\ref{eq:SOCP1b})-(\ref{eq:SOCP1c})
and the general product of terms in (\ref{eq:SOCP1b}). Next we
show how these can be transformed into equivalent representations of MISOCP form. Firstly, we state two auxiliary propositions.
\begin{proposition}
\label{prop:prod}
Let $w\in\left\{ 0,1\right\} $ and $x\in[x^{LB},\, x^{UB}]$.
Constraint $y=wx$ can be equivalently represented with the following linear constraints:
\begin{subequations}
\label{eq:prod}
\begin{align}
wx^{LB} \leq y & \leq x-(1-w)x^{LB}, \label{eq:proda}\\
x-(1-w)x^{UB} \leq y & \leq wx^{UB}, \label{eq:prodb}
\end{align}
\end{subequations}
\end{proposition}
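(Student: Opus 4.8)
The plan is to exploit the fact that $w$ is binary and argue by exhaustive case analysis on its two possible values, checking in each case that the four linear inequalities in \eqref{eq:prod} pin down $y$ to exactly the product $wx$. Since the claim is an equivalence, I would verify both inclusions: that any triple $(x,y,w)$ with $w\in\{0,1\}$ and $x\in[x^{LB},x^{UB}]$ satisfying $y=wx$ also satisfies \eqref{eq:prod}, and conversely that every feasible point of \eqref{eq:prod} has $y=wx$. The binary restriction on $w$ makes this clean, because each case reduces the inequalities to simple bounds on $y$.

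First I would treat the case $w=0$. Here \eqref{eq:proda} reads $0\le y\le x-x^{LB}$ and \eqref{eq:prodb} reads $x-x^{UB}\le y\le 0$. Combining the lower bound $0\le y$ from \eqref{eq:proda} with the upper bound $y\le 0$ from \eqref{eq:prodb} forces $y=0=wx$, as required. The two remaining inequalities, $y\le x-x^{LB}$ and $x-x^{UB}\le y$, are then automatically consistent with $y=0$ precisely because $x^{LB}\le x\le x^{UB}$, so they impose no additional restriction.

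Next I would treat the case $w=1$. Now \eqref{eq:proda} becomes $x^{LB}\le y\le x$ and \eqref{eq:prodb} becomes $x\le y\le x^{UB}$. The upper bound $y\le x$ together with the lower bound $x\le y$ forces $y=x=wx$, while the outer bounds $x^{LB}\le y$ and $y\le x^{UB}$ reduce to $x^{LB}\le x\le x^{UB}$ and hold by hypothesis. This closes the case analysis and shows that the feasible set of \eqref{eq:prod} coincides with the graph of $y=wx$ over $w\in\{0,1\}$, $x\in[x^{LB},x^{UB}]$.

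There is no genuine obstacle here: the result is the standard McCormick / big-$M$ linearization of a binary--continuous product, and the argument is elementary. The only point requiring a little care is the bookkeeping in each case to confirm that the two inequalities not directly responsible for pinning $y$ down are implied by the box constraint $x\in[x^{LB},x^{UB}]$, so that \eqref{eq:prod} is neither infeasible nor more restrictive than the bilinear relation $y=wx$.
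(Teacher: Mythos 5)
Your proof is correct and follows essentially the same approach as the paper, which simply notes that checking the cases $w=0$ and $w=1$ yields $y=0$ and $y=x$, respectively. Your version just spells out the case analysis in more detail, including the (worthwhile) check that the remaining inequalities are implied by $x\in[x^{LB},x^{UB}]$ and hence impose no extra restriction.
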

\begin{proof}
The proof is easily derived by checking that $w=0$ and $w=1$ yield $y=0$ and $y=x$, respectively.
\end{proof}
Let us define $P_{i}^{\pi}=\pi_{i}P_{i}$ and $P_{i}^{tot}=\pi_{i}P$, $i\in\mathcal{N}$,  which appear in constraints \eqref{eq:SOCP1b} and \eqref{eq:SOCP1c}.
Based on proposition \ref{prop:prod}, equalities $P_{i}^{\pi}=\pi_{i}P_{i},\,i\in\mathcal{N}$
can be represented by constraints (\ref{eq:proda})-(\ref{eq:prodb})
with $y\equiv P_{i}^{\pi}$, $w \equiv \pi_{i}$, $x \equiv P_{i}\in[0,\, P_{max}]$; similarly we can represent
$P_{i}^{tot}=\pi_{i}P$, with $y\equiv P_{i}^{tot}$ $w \equiv \pi_{i}$, $x \equiv P\in[0,\, NP_{max}]$. Next, we will show that the general product appearing in (\ref{eq:SOCP1b}) can be represented by a hierarchy of convex second-order rotated cone
(SORC) constraints of the form $x_{1}x_{2}\geq x_{0}^{2},\, x_{1}\ge0,\, x_{2}\ge0$
using proposition \ref{prop:SOCPdecomp}. 
\begin{proposition}{(\cite{book_nemirovski}, p.105)}: 
\label{prop:SOCPdecomp}
The geometric mean constraint (GMC) $x_{1}...x_{2^{l}}\geq t^{2^{l}},\, x_{i}\ge0,\, i=1,...,2^{l}$
is convex and can be represented by a hierarchy of SORC constraints.
Let us define existing variables as $x_{i}\equiv x_{0,i}$ and new
variables $x_{k,i}\ge0,\, k=1,...,l,\, i=1,...,2^{l-k}$. Then, the
GMC is equivalent to the following SORC constraints
\begin{align*}
\text{layer \ensuremath{k}: } & x_{k-1,2i-1}x_{k-1,2i} \geq x_{k,i}^{2},i=1,...,2^{l-k},k=1,..., l,\\
 & x_{l,1}\geq t.
\end{align*}
\end{proposition}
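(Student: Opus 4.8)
The plan is to prove the equivalence by a two-directional induction on the number of layers $l$, showing that the geometric mean constraint (GMC) $x_{1}\cdots x_{2^{l}}\ge t^{2^{l}}$ holds with all $x_{i}\ge 0$ if and only if there exist auxiliary variables $x_{k,i}\ge 0$ satisfying the layered system of second-order rotated cone (SORC) constraints $x_{k-1,2i-1}x_{k-1,2i}\ge x_{k,i}^{2}$ together with $x_{l,1}\ge t$. The base case $l=1$ is immediate: the single constraint $x_{1}x_{2}\ge x_{1,1}^{2}$ with $x_{1,1}\ge t$ is exactly one SORC, and a variable $x_{1,1}$ with $t\le x_{1,1}\le\sqrt{x_{1}x_{2}}$ exists precisely when $x_{1}x_{2}\ge t^{2}$.

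For the inductive step, I would first establish the backward direction (SORC system $\Rightarrow$ GMC) by multiplying inequalities within each layer. Since every $x_{k-1,2i-1}x_{k-1,2i}\ge x_{k,i}^{2}$ and all quantities are nonnegative, taking the product over $i=1,\dots,2^{l-k}$ of the layer-$k$ constraints yields $\prod_{j}x_{k-1,j}\ge\prod_{i}x_{k,i}^{2}$. Chaining these telescoping products from the bottom layer up to the top gives $\prod_{i=1}^{2^{l}}x_{0,i}\ge x_{l,1}^{2^{l}}\ge t^{2^{l}}$, which is the GMC. For the forward direction (GMC $\Rightarrow$ SORC system), I would explicitly construct the auxiliary variables: set $x_{k,i}=\bigl(\prod_{j=1}^{2^{k}}x_{0,\,(i-1)2^{k}+j}\bigr)^{1/2^{k}}$, i.e.\ the geometric mean of the $2^{k}$ leaf variables feeding into node $(k,i)$. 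One then verifies directly that these satisfy each SORC constraint with equality and that $x_{l,1}=\bigl(\prod_{i}x_{i}\bigr)^{1/2^{l}}\ge t$ by the GMC hypothesis.

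The routine but essential technical point is to confirm that each defined $x_{k,i}$ is real and nonnegative, which holds because every $x_i\ge 0$, and that the SORC constraint $x_{k-1,2i-1}x_{k-1,2i}\ge x_{k,i}^{2}$ is genuinely convex in the rotated-cone sense; this last fact is the standard observation that $\{(x_{1},x_{2},x_{0}):x_{1}x_{2}\ge x_{0}^{2},\,x_{1},x_{2}\ge 0\}$ is a rotated second-order cone, representable as $\bigl\|(2x_{0},\,x_{1}-x_{2})\bigr\|\le x_{1}+x_{2}$. The main obstacle, and the part requiring the most care, is the bookkeeping of indices in the forward construction: verifying that the leaf variables partition cleanly across the binary tree so that the geometric-mean definitions at adjacent layers are mutually consistent. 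Since this is a cited result (\cite{book_nemirovski}, p.\ 105), I expect the proof to be stated compactly, with the inductive product argument for the backward direction and the explicit tree-of-geometric-means construction for the forward direction carrying the full content.
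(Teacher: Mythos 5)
Your proof is correct. For the direction the paper actually argues (the SORC hierarchy implies the GMC), your telescoping product over each layer is exactly the paper's argument, which the paper only illustrates on the $l=2$ instance $x_1x_2x_3x_4\geq t^4$ before deferring the general case to the citation. Where you go beyond the paper is the converse: the paper never constructs the auxiliary variables from a feasible GMC point, whereas you give the explicit assignment $x_{k,i}=\bigl(\prod_{j=1}^{2^{k}}x_{0,(i-1)2^{k}+j}\bigr)^{1/2^{k}}$, which satisfies every layer constraint with equality (the leaves under nodes $(k-1,2i-1)$ and $(k-1,2i)$ partition those under $(k,i)$, so the index bookkeeping does close up) and yields $x_{l,1}\geq t$ from the GMC hypothesis. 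That converse is precisely what justifies the word ``equivalent'' in the statement, so your version is the more complete one. The only pedantic caveat is that the equivalence as written requires $t\geq 0$: for $t<0$ the SORC system is trivially satisfiable in $x_{l,1}\geq t$ while $x_1\cdots x_{2^l}\geq t^{2^l}$ need not hold. This is harmless in the paper's application, where $t=2^{C_{I}/N_{u}}(1+\sigma_{\epsilon}^{2}P)>0$.
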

The proof is based on the fact that each of these constraints is SORC,
hence convex, while if these constraints hold the GMC holds as well. To see why this is true, let us consider the case $x_{1}x_{2}x_{3}x_{4}\geq t^{4}$.
Setting $x_{i}\equiv x_{0,i},\, i=1,..,4$, this constraint becomes:
\begin{subequations}
\label{eq:exampleGM}
\begin{align}
x_{0,1}x_{0,2}  \geq & x_{1,1}^{2},~x_{0,3}x_{0,4}  \geq x_{1,2}^{2},\label{eq:exampleGMa}\\
x_{1,1}x_{1,2}  \geq & x_{2,1}^{2},~x_{2,1} \geq  t. \label{eq:exampleGMd}
\end{align}
\end{subequations}
The two constraints in \eqref{eq:exampleGMa} yield that $x_{0,1}x_{0,2}x_{0,3}x_{0,4} \geq x_{1,1}^{2}x_{1,2}^{2}$, which combined with \eqref{eq:exampleGMd} gives $x_{1,1}^2 x_{1,2}^2  \geq x_{2,1}^{4}\geq t^4$; hence, if \eqref{eq:exampleGMa}-\eqref{eq:exampleGMd} hold, the original constraint also holds.

To bring (\ref{eq:SOCP1b}) into the GMC form, we define $M=\left\lceil \log N\right\rceil $ and $N_{u}=2^{M}$,
and rewrite it as:
\small
\[
\prod_{i=1}^{N}\left(1+\sigma_{\epsilon}^{2}P+P_{i}^{\pi}\lambda_{i}\right)\prod_{i=1}^{N_{u}-N}\left(1+\sigma_{\epsilon}^{2}P\right)\geq(2^{C_{I}/N_{u}}(1+\sigma_{\epsilon}^{2}P))^{N_{u}}.
\]
\normalsize
Following proposition \ref{prop:SOCPdecomp}, we can represent (\ref{eq:SOCP1b}) with the hierarchy of SORC constraints by setting $x_{0,i}=1+\sigma_{\epsilon}^{2}P+P_{i}^{\pi}\lambda_{i}\ge0,$ $i\in\mathcal{N}$, $x_{0,i}=1+\sigma_{\epsilon}^{2}P\ge0,$ $i=N+1,...,N_{u}$ and $t=2^{C_{I}/N_{u}}(1+\sigma_{\epsilon}^{2}P)$.
Towards this direction, a total of $N_{u}-1$ new variables are needed,
$x_{k,i}\ge0,k=1,...,M,\, i=1,...,2^{M-k}$, and $N_{u}/2$ SORC constraints.

Based on the above analysis, the original problem can be rewritten
as:
\small
\begin{subequations}
\label{eq:SOCP2}
\begin{align}
\min & ~P=\sum_{i\in\mathcal{N}} P_{i}\label{eq:SOCP2a}\\
\text{s.t. } & x_{k-1,2i-1}x_{k-1,2i}\geq x_{k,i}^{2},~k=1,...,M,i=1,...,2^{M-k},\label{eq:SOCP2b}\\
 & x_{M,1}\geq t,~ t=2^{C_{I}/N_{u}}(1+\sigma_{\epsilon}^{2}P),\label{eq:SOCP2c}\\
 & x_{0,i}=1+\sigma_{\epsilon}^{2}P + P_i^{\pi}\lambda_i,~i\in\mathcal{N},\label{eq:SOCP2e}\\
 & x_{0,i}=1+\sigma_{\epsilon}^{2}P,~i=N+1,...,N_{u},\label{eq:SOCP2f}\\
 & x_{k,i}\geq 0, k=0,...,M, i=1,...,2^{M-k},\label{eq:SOCP2g}\\
 & \sum_{i\in\mathcal{N}} (P_{i}-P_{i}^{\pi})\lambda_{i}+\sigma_{\epsilon}^{2}(NP-\sum_{i\in\mathcal{N}} P_{i}^{tot})\geq C_{EH},\label{eq:SOCP2h}\\
 & P_{i}^{\pi}\leq\pi_{i}P_{max},\, P_{i}^{\pi}\leq P_{i},~i\in\mathcal{N}, \label{eq:SOCP2j}\\
 & P_{i}^{\pi}\geq P_{i}-(1-\pi_{i})P_{max},\, P_{i}^{\pi}\ge0,~i\in\mathcal{N}, \label{eq:SOCP2k}\\
 & P_{i}^{tot}\leq\pi_{i}NP_{max},\, P_{i}^{tot}\leq P,~i\in\mathcal{N}, \label{eq:SOCP2l}\\
 & P_{i}^{tot}\geq P-(1-\pi_{i})NP_{max},\, P_{i}^{tot}\ge0,~ i\in\mathcal{N}, \label{eq:SOCP2m}\\
 & 0\le P_{i}\leq P_{max},\,\pi_{i}\in\left\{ 0,1\right\},~ i\in\mathcal{N}. \label{eq:SOCP2n}
\end{align}
\end{subequations}
 \normalsize

In formulation \eqref{eq:SOCP2}, (\ref{eq:SOCP2b})-(\ref{eq:SOCP2g}) represent
the information constraint, (\ref{eq:SOCP2h}) the EH constraint,
while (\ref{eq:SOCP2j}),
(\ref{eq:SOCP2k}) and (\ref{eq:SOCP2l}), (\ref{eq:SOCP2m}) are
used to linearize equations $P_{i}^{\pi}=\pi_{i}P_{i}$ and $P_{i}^{tot}=\pi_{i}P$,
respectively. Note that formulation \eqref{eq:SOCP2} is MISOCP and can be solved with
standard optimization solvers.

\subsection{Near-optimal MILP solution}
\label{sec:MILP}

In this section we propose an iterative approach for the solution of \eqref{eq:ProblemGeneral}, where in each iteration a MILP problem is solved.
Although, this approach yields near-optimal results (in most cases optimal) it is considerably faster than MISOCP for large problems. Towards this direction, we examine the implications of having a fixed value for the total power $P$ in \eqref{eq:ProblemGeneralb}.  Assuming that this value is equal to $P_f$, the problem becomes:
\begin{subequations}
\label{eq:MILP1}
\begin{align}
\min & ~P = \sum_{i\in\mathcal{N}} P_i \label{eq:MILP1a}\\
\text{s.t.}\;\;&\sum_{i\in\mathcal{N}} \log \left(1+\frac{\pi_i P_i \lambda_i}{1+\sigma_{\epsilon}^2 P_f}  \right)\geq C_I, \label{eq:MILP1b}\\
&\sum_{i\in\mathcal{N}} (1-\pi_i)\left(P_i \lambda_i+\sigma_{\epsilon}^2P \right)\geq C_{EH}, \label{eq:MILP1c} \\
&0 \leq P_i \leq P_{max},~\pi_i \in \{0,1 \},~i\in\mathcal{N}.\label{eq:MILP1e}
\end{align}
\end{subequations}
Using the transformations, $P_{i}^{\pi}=\pi_{i}P_{i}$ and $P_{i}^{tot}=\pi_{i}P$,  introduced in section \ref{sec:MISOCP} via proposition 1, all the constraints of problem \eqref{eq:MILP1} become of MILP form, except from constraint \eqref{eq:MILP1b}, due to the presence of terms $f_i(P_i^{\pi}) = \log\left(1+\frac{P_i^{\pi} \lambda_i}{1+\sigma_{\epsilon}^2 P_f}\right)$. Nonetheless, these terms can be transformed into MILP form by considering appropriate piecewise linear approximation (PLA) functions $g_i(P_i^{\pi}) \approx f_i(P_i^{\pi})$. PLA functions $g_i(P_i^{\pi})$ can also be selected to provide either upper or lower bounds (UB or LB) to the optimal value of problem \eqref{eq:MILP1}, if $g_i(P_i^{\pi}) \leq f_i(P_i^{\pi})$ or $g_i(P_i^{\pi}) \geq f_i(P_i^{\pi})$, $0\le P_i^{\pi} \leq P_{max}$, $i\in\mathcal{N}$, respectively.  Constraints $g_i(P_i^{\pi}) \leq f_i(P_i^{\pi})$ and $g_i(P_i^{\pi}) \geq f_i(P_i^{\pi})$ can be achieved by taking the lower envelop of a number of lines that intersect $f_i(P_i^{\pi})$ from below or are tangent to $f_i(P_i^{\pi})$ from above respectively, as illustrated in Fig. \ref{fig:PLA}.

\begin{figure}[t]
\centering
\includegraphics[width=\linewidth]{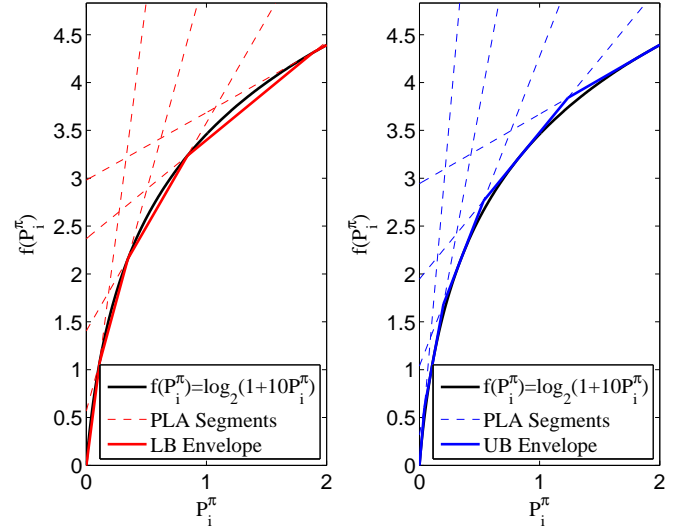}
\caption{Piecewise linear approximation; $P_i^{max}=2$; and $\lambda_i/(1+\sigma_{\epsilon}^{2}P_f)=10$.}
\label{fig:PLA}
\end{figure}

One can easily find the PLA of a logarithmic function, by considering a fixed number of points, at integer multiples of a parameter $\kappa$, such that segments are either tangent (for UB) or intersect (for LB) the logarithmic function. Nonetheless, this approach has no control over the approximation error, while the maximum error for different segments varies. For this reason, an approach has been followed that produces segments when needed so that the approximation error does not exceed a predefined value $e_{max}$ \cite{timotheouAssetTask}. Although this approach may lead to a large number of segments we can adjust $e_{max}$ to achieve the desirable number of segments and at the same time maintain the error less than a constant known value.

The approximation procedure, involves two steps for the production of each segment. In LB-PLA, where an LB envelope needs to be produced, the first step for the derivation of segment $l$, starts from a known intersection point $\{P_{i,l}^{\pi}, f(P_{i,l}^{\pi})\}$ (initially, $\{P_{i,1}^{\pi}=0, f(P_{i,1}^{\pi})=0\}$) and finds a line $\alpha_{i,l} P_i^{\pi}+\beta_{i,l}$ such that $f(P_i^{\pi}) - (\alpha_{i,l} P_i^{\pi}+\beta_{i,l})\leq e_{max}$, for $f(P_i^{\pi}) \geq \alpha_{i,l} P_i^{\pi} + \beta_{i,l}$. In the second step, a new intersection point is found, such that $f(P_{i,l+1}^{\pi}) = \alpha_{i,l} P_{i,l+1}^{\pi} + \beta_{i,l}$, which completes the creation of the $l$th segment. In UB-PLA, where a UB envelope needs to be produced,  the first step for the derivation of segment $l$, starts from a point of maximum error, $\{P_{i,l}^{\pi}, f(P_{i,l}^{\pi}) + e_{max}\}$ (initially, the procedure starts from $\{P_{i,1}^{\pi}=0, f(P_{i,1}^{\pi})=0\}$ to produce such a point) and computes a line $\alpha_{i,l} P_i^{\pi}+b_{i,l}$ that is tangent to $f(P_i^{\pi})$. In the second step, a new point of maximum error is found such that $\alpha_{i,l} P_{i,l+1}^{\pi} + \beta_{i,l} = f(P_{i,l+1}^{\pi})+e_{max}$, which completes the creation of the $l$th segment. This two-step iterative procedure is followed until $P_{max}$ is reached.

Having obtained PLA functions $g_i(P_i^{\pi})$, $i\in\mathcal{N}$, we now describe how to obtain a problem formulation
with MILP constraints. Let us assume that function $f_i(P_i^{\pi})$ is approximated by $L_i$ linear segments with slopes
$\alpha_{i,1}$,...,$\alpha_{i,L_i}$ and start-points $P_{i,1}^{\pi}$,...,$P_{i,L_i}^{\pi}$. Let us also assume that
$P_{i,L_i+1}^{\pi}=P_{max}$. Because $f_i(P_i^{\pi})$ are monotonically increasing and concave, the PLA functions $g_i(P_i^{\pi})$, will also be concave, with segments whose slopes have monotone decreasing values:
$\alpha_{i,1} > \alpha_{i,2}$$>...>$$\alpha_{i,L_i}$. Let $\xi_{i,l}, l=1,...,L_i$ be the value of $P_i^{\pi}$ corresponding to
the $l$th linear segment so that $0 \leq \xi_{i,l} \leq P_{i,l+1}^{\pi} - P_{i,l}^{\pi},~l=1,...,L_i$. Under the assumption that
$\xi_{i,m} = P_{i,m+1}^{\pi} - P_{i,m}^{\pi} ,m=1,...,l-1$ when $\xi_{i,l}>0$, it is true that $P_i^{\pi} = \sum_{l=1}^{L_i}
\xi_{i,l}$ and also that $g_i(P_i^{\pi}) = \sum_{l=1}^{L_i} \alpha_{i,l} \xi_{i,l}$. In other words, $P_i^{\pi}$ can be replaced
by the sum of variables $\xi_{i,l}, l=1,...,L_i$ if we can ensure that the solution of the optimization problem will always
be such that each $\xi_{i,l}$ is nonzero only when the variables $\xi_{i,m},~m=1,\ldots,l-1$ have obtained their maximum value. 

To explain why this condition holds for problem \eqref{eq:MILP1}, note that variables $P_i^{\pi}$ should be as small as possible to yield minimum total power, while the impact of $g_i(P_i^{\pi})$, and hence of variables  $\xi_{i,l}$,  should be as large as possible to ensure satisfaction of constraint \eqref{eq:MILP1b} with minimum power. Because the slopes of the PLA segments have monotonically decreasing values, the first segment has the largest impact; this implies that $\xi_{i,1}$ will be the first variable to be assigned a nonzero value and only if $\xi_{i,1}$ becomes equal to its maximum value, variable
$\xi_{i,2}$ will be assigned a nonzero value. This ensures the assumption made that $\xi_{i,l}$ will be non-zero only if $\xi_{i,m},m=1,...,l-1$ have attained their maximum value. Based on the above discussion formulation \eqref{eq:MILP1} becomes:
\begin{subequations}
\label{eq:MILP2}
\begin{align}
\min & ~P=\sum_{i\in\mathcal{N}} P_i \label{eq:MILP2a}\\
\text{s.t.}&\sum_{i\in\mathcal{N}} \sum_{l=1}^{L_i} \alpha_{i,l} \xi_{i,l} \geq C_I, \label{eq:MILP2b}\\
 & P_i^{\pi} = \sum_{l=1}^{L_i} \xi_{i,l},~i\in\mathcal{N}, \label{eq:MILP2c}\\
 & 0 \leq \xi_{i,l} \leq P_{i,l+1}^{\pi} - P_{i,l}^{\pi},~i\in\mathcal{N},~l=1,...,L_i,\label{eq:MILP2d}\\
 & \text{Constraints \eqref{eq:SOCP2h} - \eqref{eq:SOCP2n}.}
\end{align}
\end{subequations}
Note that formulation \eqref{eq:MILP2} is MILP, as constraints \eqref{eq:MILP2a} - \eqref{eq:MILP2d}, which represent constraint \eqref{eq:MILP1b}, also become linear under the considered PLA scheme. Also, by considering UB-PLA (LB-PLA) for $f_i(P_i^{\pi})$ a lower-bound (upper bound) to problem \eqref{eq:MILP1} is obtained, as we need to use more (less) power to satisfy the information constraint. 

If the solution of formulation \eqref{eq:MILP2} is feasible and the total required power is less than $P_f$, i.e.  $P\le P_f$, then $P_f$ is an upper bound to the optimal total power of problem \eqref{eq:ProblemGeneral}, otherwise $P_f$ is a lower bound. Based on this observation, we can construct a bisection procedure to obtain a near-optimal value for the total power, according to algorithm \ref{alg:MILP}.

\begin{algorithm}[t]
	\caption{\textbf{: MILP solution to problem \eqref{eq:ProblemGeneral}}} 
	\begin{algorithmic} [1] 
 	\STATE  \textbf{Init.} $P_{LB} = 0$; $P_{UB} = NP_{max}$; $\epsilon = 10^{-6}$; $e_{max} = 10^{-3}$; $f=0$.
  \WHILE{$|P_{UB} - P_{LB}|\ge\epsilon$}
  		\STATE Set $P_f = (P_{UB} + P_{LB})/2$.
  		\STATE Compute $\alpha_{i,l}$ and $P_{i,l}^{\pi}$, $i\in\mathcal{N}$, $l=1,...,L_i$ according to $e_{max}$ using PLA.
  		\STATE Solve problem \eqref{eq:MILP2} using a MILP solver to obtain $P_i$, $\pi_i$, $i\in\mathcal{N}$.
  		\IF{(problem feasible) AND ($P \leq P_f$)}
				\STATE Set $P_{UB} = P_f$; $\pi^{*}_{i}=\pi_{i}$; $f=1$.
			\ELSE  		
				\STATE Set $P_{LB} = P_f$.
  		\ENDIF
	\ENDWHILE
	\IF{$f=1$}
			\STATE Solve the power allocation problem with fixed assignment $\pi^*_i$, $i\in \mathcal{N}$ (Section \ref{sec:powerAllocation}).
	\ELSE
  		\STATE Problem \eqref{eq:ProblemGeneral} is deemed infeasible.
  \ENDIF
\end{algorithmic}  
\label{alg:MILP}
\end{algorithm}

Loss of optimality in algorithm \ref{alg:MILP} is primarily due to the PLA of the logarithmic terms in the total information rate expression which may result in a non-optimal eigenchannel assignment. Note that global optimality is preserved when the optimal eigenchannel assignment is found, by optimally solving the power allocation problem with fixed eigenchannel assignment (line 11).

\subsection{Special Case: $P_{max}=\infty$}
\label{sec:optimalPinf}

In this section, we deal with the solution of problem \eqref{eq:ProblemGeneral} when $P_{max}=\infty$. We show that we can obtain the optimal assignment by examining a polynomial number of information and EH eigenchannel combinations and develop an algorithm that solves this problem optimally in $O(N^2)$. We also derive analytical results for obtaining the optimal power allocation for a fixed eigenchannel assignment, as well as for deciding whether a candidate assignment is potentially optimal prior to computing the power allocation. 

Assume that a given assignment of channels is made such that $\pi_{i_1}=1$, $i_1\in \cal{I}$ and $\pi_{i_2}=0$, ${i_2}\in \cal{E}$, where $\mathcal{I}$ and $\mathcal{E}$ denote the sets of information and EH eigenchannels, respectively. Due to this assignment, problem \eqref{eq:ProblemGeneral}, with  $P_{max}=\infty$, can be written as:
\begin{subequations}
\label{eq:ProblemFA_InfP}
\begin{align}
\min & ~P={\displaystyle \sum_{i\in\mathcal{N}} P_{i}} \label{eq:ProblemFA_InfPa}\\
\text{s.t.} & {\displaystyle \sum_{i_1\in{\cal I}}\log\left(1+\frac{P_{i_1}\lambda_{i_1}}{1+\sigma_{\epsilon}^{2}P}\right)\geq C_{I}}, \label{eq:ProblemFA_InfPb} \\
 & {\displaystyle \sum_{i_2\in{\cal E}}\left(P_{i_2}\lambda_{i_2}+\sigma_{\epsilon}^{2}P\right)\geq C_{EH}}, \label{eq:ProblemFA_InfPc}\\
 & P_{i}\ge0,\, i\in\mathcal{N}. \label{eq:ProblemFA_InfPd}
\end{align}
\end{subequations}

Three important remarks can be made regarding \eqref{eq:ProblemFA_InfP}.

\noindent \textbf{Remark 1.} At most one EH eigenchannel can have nonzero power.  

Remark 1 is true because there is no imposed upper bound on the transmitted power in
each eigenchannel. As a result, if a set of eigenchannels ${\cal E}$ are assigned to
satisfy the EH constraint, only the one with the largest
eigenvalue in this set, $\lambda_{e}$, $e\in{\cal E}$ will be nonzero. Despite the fact that the power assigned to other channels in ${\cal E}$ is zero, they still contribute
towards the satisfaction of the EH constraint due to the presence
of the term $\sigma_{\epsilon}^{2}P$ in the EH constraint.

\noindent  \textbf{Remark 2.}  Eigenchannels with $P_i=0$ are those with the overall smallest eigenvalues. 

This is because ``better'' eigenchannels are beneficial for both the satisfaction of the information and EH
constraints. Hence, if we consider two eigenchannel with $\lambda_i>\lambda_j$, $P_i=0$ and $P_j=c>0$, by simply switching the power of the two eigenchannel, i.e., $P_j=0$ and $P_i=c>0$, more benefit can be obtaining for the corresponding constraint without affecting the other constraint or the total power in any way.

\noindent  \textbf{Remark 3.}  At the optimal solution, no information eigenchannel should have zero power. 

To see why this is true, notice that eigenchannels with zero power can contribute to the EH constraint but not to the information constraint.  This is a result of the cross eigenchannel interference, $\sigma_{\epsilon}^2 P$, that is present in all eigenchannels due to the channel estimation error. 
Hence, zero power information eigenchannels should always be used as EH eigenchannels to have a positive contribution towards the satisfaction of the EH constraint. 

These remarks imply that in order to find the optimal assignment we need two different indices. The first index $e$ indicates the energy eigenchannel with a possibly non-zero value, while the second indicates the channel with the largest eigenvalue such that $P_{i_2}=0$, $i_2 \in \mathcal{E} \setminus \{e\}$. Based on the above observations, we can conclude that in order to find the optimal assignment, we only need to examine $O(N^2)$ different assignment combinations. The next result, summarizes the solution of each assignment combination. 

\begin{theorem}
Let $\mathcal{I},~\mathcal{E}$ be a fixed assignment, $e=\min{\{\mathcal{E}\}}$, $\zeta = \sum_{i\in{\cal I}_1}(1/\lambda_{i})+\sigma_{\epsilon}^{-2}-|{\cal I}|\beta$, $\mathcal{I}_1=\{i:i\in\mathcal{I},~P_i>0\}$, and $\beta$ be given by:
\begin{equation}
\beta=\left(2^{C_{I}}/{\displaystyle \prod_{i\in{\cal I}_1}}\lambda_{i}\right)^{\frac{1}{|{\cal I}_1|}}. \label{eq:beta}
\end{equation}
The optimal total power for this assignment is given by $P=(y-1)/\sigma_{\epsilon}^{2}$, $y\geq1$, where:
\begin{equation}
y=\left\{ \begin{array}{ll}
\sigma_{\epsilon}^{-2}/\zeta, & \text{if }\sigma_{\epsilon}^{-2}/\zeta\ge\frac{C_{EH}}{|{\cal E}|}+1\\
\frac{(C_{EH}+|{\cal E}|)/\lambda_{e}+\sigma_{\epsilon}^{-2}}{\zeta + |{\cal E}|/\lambda_{e}}, & \text{otherwise.}
\end{array}\right.
\label{eq:y}
\end{equation}
In addition, given $P$ and $y$, the optimal power allocation for a given assignment,  $\mathcal{I},~\mathcal{E}$, is: 
\begin{align}
P_{i}&=\max\{0,y(\beta-1/\lambda_{i})\}, i\in\mathcal{I}, \label{eq:Pinfo} \\
P_{i}&=
\left\{\begin{array}{ll}
\max{\left\{0,\frac{C_{EH}-(y-1)|{\cal E}|}{\lambda_{i}}\right\}}, &i=e,\\
0, &i\in \mathcal{E} \setminus \{e\}
\end{array}\right. . \label{eq:Pen}
\end{align} 
\end{theorem}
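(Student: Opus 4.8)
The plan is to remove the coupling between the total power $P$ and the individual powers $P_i$ by the change of variable $y = 1 + \sigma_{\epsilon}^2 P$, so that $P = (y-1)/\sigma_{\epsilon}^2$ and $y \geq 1$. Under this substitution the information constraint \eqref{eq:ProblemFA_InfPb} becomes $\prod_{i \in \mathcal{I}}(1 + P_i\lambda_i/y) \geq 2^{C_I}$ and, invoking Remark 1 so that only channel $e = \min\{\mathcal{E}\}$ carries energy power, the EH constraint \eqref{eq:ProblemFA_InfPc} becomes $P_e\lambda_e + |\mathcal{E}|(y-1) \geq C_{EH}$. The crucial observation is that once $y$ is held fixed, the problem decouples: the information powers are governed by a (scaled) waterfilling problem, while the single energy power $P_e$ is pinned by the EH constraint.

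First I would solve the information subproblem for fixed $y$. Writing $x_i = 1 + P_i\lambda_i/y$, minimizing $\sum_{i \in \mathcal{I}} P_i$ subject to $\prod_i x_i \geq 2^{C_I}$, $x_i \geq 1$, is a standard convex programme; its KKT/Lagrangian stationarity conditions give a constant water level, $x_i = \beta\lambda_i$ on the active set, equivalently $P_i = y(\beta - 1/\lambda_i)_+$, which is exactly \eqref{eq:Pinfo}. Imposing the binding information constraint $\prod_{i \in \mathcal{I}_1}\beta\lambda_i = 2^{C_I}$ over the active set $\mathcal{I}_1 = \{i \in \mathcal{I} : P_i > 0\}$ and solving for $\beta$ yields \eqref{eq:beta}. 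The active set is self-consistently the channels with $\lambda_i > 1/\beta$, i.e. the largest eigenvalues, in agreement with the usual waterfilling monotonicity; for the candidate assignments retained by Remark 3 all information channels are active, so $\mathcal{I}_1 = \mathcal{I}$ and the count $|\mathcal{I}|$ appearing in $\zeta$ coincides with $|\mathcal{I}_1|$.

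Next I would pin down the energy power. Since the EH constraint must hold, $P_e = \max\{0, (C_{EH} - (y-1)|\mathcal{E}|)/\lambda_e\}$, which is \eqref{eq:Pen}; the threshold $P_e = 0$ occurs precisely when the cross-eigenchannel interference $|\mathcal{E}|(y-1)$ already meets $C_{EH}$, i.e. when $y \geq C_{EH}/|\mathcal{E}| + 1$. It remains to close the loop by enforcing self-consistency of $y$. Substituting the total optimal power $P = \sum_{i \in \mathcal{I}}P_i + P_e = y(|\mathcal{I}_1|\beta - \sum_{i \in \mathcal{I}_1}1/\lambda_i) + P_e$ into $P = (y-1)/\sigma_{\epsilon}^2$ gives a single scalar equation for $y$. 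When $P_e = 0$ this linear equation yields $y = \sigma_{\epsilon}^{-2}/\zeta$; when $P_e > 0$, substituting the expression for $P_e$ and clearing $\lambda_e$ gives the second branch of \eqref{eq:y}. The consistency condition selecting between the two branches is exactly whether $\sigma_{\epsilon}^{-2}/\zeta$ is at least $C_{EH}/|\mathcal{E}| + 1$, which is how the two cases in \eqref{eq:y} are distinguished.

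The main obstacle is the mutual coupling: $P$ enters both constraints through the interference term, so the waterfilling level, the energy power and $y$ are interdependent, and the substitution $y = 1 + \sigma_{\epsilon}^2 P$ is what converts this into a solvable fixed point. The remaining delicate points are (i) arguing that the relevant constraints are active at the optimum — the information constraint always binds (otherwise total power, and hence $y$, could be reduced while preserving feasibility), and the EH constraint binds whenever $P_e > 0$ — and (ii) verifying that the case-split threshold derived from $P_e = 0$ matches the stated condition and that the resulting $y$ satisfies $y \geq 1$, so that $P = (y-1)/\sigma_{\epsilon}^2 \geq 0$ is a legitimate power. The bookkeeping in the $P_e > 0$ branch (clearing denominators and recognizing $\sigma_{\epsilon}^{-2} + \sum_{i \in \mathcal{I}_1}1/\lambda_i - |\mathcal{I}_1|\beta = \zeta$) is routine once the fixed-point equation is set up.
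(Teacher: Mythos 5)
Your derivation is correct and recovers exactly the intended argument: the substitution $y = 1+\sigma_{\epsilon}^{2}P$, waterfilling on the information channels to get \eqref{eq:beta} and \eqref{eq:Pinfo}, a single nonzero EH channel giving \eqref{eq:Pen}, and the self-consistency equation whose two branches yield \eqref{eq:y}; your observation that $|\mathcal{I}|=|\mathcal{I}_1|$ for candidate assignments (via Remark 3) correctly reconciles the $|\mathcal{I}|\beta$ term in $\zeta$. The paper itself only cites \cite{greencomm13} for this proof, so your writeup is a faithful (and more explicit) reconstruction of the same approach rather than a different route.
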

 
\begin{proof} 
The proof can be derived by following similar analysis with Section III in \cite{greencomm13}.
\end{proof} 
 
Theorem 1 consists of two parts; the first part illustrates how the optimal power, $P$, of the examined assignment can be obtained, while the second part provides its optimal power allocation. To obtain $P$ it is essential to compute parameter $\beta$ which requires finding set $\mathcal{I}_1$ containing the indices of information eigenchannels with non-zero power. These can be obtained by sequentially adding information eigenchannels to $\beta$, starting from the best one, until an eigenchannel, $k$, is found for which $\beta - 1/\lambda_k \leq 0$, $k\in\mathcal{I}$. One important observation regarding theorem 1, which emanates from remark 3, is that condition $\beta-1/\lambda_j>0$, $j=\max\{\cal{I}\}$ is necessary for a candidate assignment to be optimal. Hence, any assignments that do not satisfy this condition are immediately rejected without any further consideration. 
 
To check for infeasible solutions, we further need
to ensure that the obtained solution for $y$ yields a positive value
for $P_{e}$ and that $y\geq 1$. As $y$ is a monotonically
increasing function of $P$, by finding the best feasible value for
$y$ among all feasible assignments, $y_{opt}$, we can derive the optimal
power allocation of \eqref{eq:ProblemFA_InfP}, as outlined in algorithm \ref{alg:OptPinf}.


\begin{algorithm}[t]
	\caption{\textbf{: Optimal solution to problem \eqref{eq:ProblemFA_InfP}}} 
	\begin{algorithmic} [1] 
 	\STATE  \textbf{Init.} $y_{opt}=\infty$, $\cal{I}$$_{opt} = \varnothing$, and $e_{opt}=\varnothing$.
  \FOR{$e=1$ to $N$}
	  \FOR{$i=e+1$ to $N$}
  		\STATE Initialise fixed assignment: $\cal{I}$$=\{1,...,i-1\}\setminus\{e\}$.
  		\IF{($|\cal{I}$$| > 0$)}
  			\STATE Compute $\beta$ according to \eqref{eq:beta}.
	  		\IF{($\beta-1/\lambda_j>0$, $j=\max\{\cal{I}\}$)}
	  			\STATE Compute the value of $y$ according to \eqref{eq:y}.
		  		\IF{($(y_{opt}>y)$ AND ($y\ge1$))}
						\STATE Store the optimal solution found so far:
						\STATE $y_{opt}=y$, $\cal{I}$$_{opt}=\cal{I}$, and $e_{opt}=e$.
	  			\ENDIF
  			\ENDIF  		
  		\ENDIF
  \ENDFOR 
  \ENDFOR 
	\IF{($1\le y_{opt} < \infty$)}
		\STATE Having found optimal assignment $\cal{I}$$_{opt}$, $e_{opt}$ and $y_{opt}$ compute optimal power allocation according to \eqref{eq:Pinfo} and \eqref{eq:Pen}.
	\ELSE
		\STATE Deem problem infeasible.
	\ENDIF
\end{algorithmic}  
\label{alg:OptPinf}
\end{algorithm}

It should be emphasized that algorithm \ref{alg:OptPinf} has two very attractive characteristics: (a) it solves a nonlinear combinatorial optimization problem involving binary variables, in polynomial time, as it requires the examination of a polynomial number of fixed assignments (approximately $N^2/2$), and (b) the optimal power allocation needs to be derived only for the optimal assignment $\cal{I}$$_{opt}$, $e_{opt}$, and not for all examined fixed assignments which reduces the execution time of the algorithm. Although each assignment appears to be of computational complexity $O(N)$ due to the presence of $\sum_{i\in{\cal I}}\frac{1}{\lambda_{i}}$ and  $\prod_{i\in{\cal I}}\lambda_{i}$, we can reduce the computational complexity to $O(1)$. This can be achieved by storing the sum and product terms for fixed EH assignment, $e$, and updating their values for an increasing number of information channels $i=e+1,...,N$. Hence, the total complexity of algorithm  \ref{alg:OptPinf} is $O(N^2)$.

\section{Optimal Power Allocation with Fixed Eigenchannel Assignment}
\label{sec:powerAllocation}

In this section, we develop a waterfilling-like procedure to solve the optimal power allocation problem for a given eigenchannel assignment. Addressing this problem is essential for the development of a low complexity algorithm for problem \eqref{eq:ProblemGeneral}, discussed in Section \ref{sec:heur}; it is also used to find the optimal power allocation of the MILP approach upon derivation of the eigenchannel assignment. 

Let, $\mathcal{I}$ and $\mathcal{E}$, denote the sets of eigenchannels assigned for information and EH respectively, and $P_f$ denote a fixed value for the total power that appears in constraint \eqref{eq:ProblemGeneralb}; then, the optimal power allocation problem can be defined as:
\begin{subequations}
\label{eq:PowerAllocation}
\begin{align}
\min~ & \sum_{i\in\{\mathcal{I}\cup\mathcal{E}\}} P_i \label{eq:PowerAllocationa}\\
\text{s.t. }&\sum_{i\in\mathcal{I}} \log \left(1+\frac{P_i \lambda_i}{1+\sigma_{\epsilon}^2 P_f}  \right)\geq C_I, \label{eq:PowerAllocationb}\\
&\sum_{i\in \mathcal{E}} \big(P_i \lambda_i+\sigma_{\epsilon}^2\sum_{j\in\{\mathcal{I} \cup \mathcal{E}\}}P_j \big)\geq C_{EH}, \label{eq:PowerAllocationc} \\
&0 \leq P_i \leq P_{max},~i\in\{\mathcal{I}\cup\mathcal{E}\}. \label{eq:PowerAllocatione}
\end{align}
\end{subequations}

In formulation \eqref{eq:PowerAllocation}, the information constraint \eqref{eq:PowerAllocationb} is independent of the EH constraint as no power allocated to energy eigenchannels appears in it, while the EH constraint only depends on the total power allocated to information eigenchannels so that the two subproblems can be solved almost independently. 

\begin{algorithm}[t]
	\caption{\textbf{: Optimal power allocation for problem \eqref{eq:PowerAllocation}}} 
	\begin{algorithmic} [1] 
 	\STATE  \textbf{Init.} $P_{LB} = 0$; $P_{UB} = NP_{max}$; $\epsilon = 10^{-6}$; $f=0$.
  \WHILE{$|P_{UB} - P_{LB}|\ge\epsilon$}
  		\STATE Set $P_f = (P_{UB} + P_{LB})/2$.
			\STATE Solve the information subproblem \eqref{eq:InformationSubproblem} according to the procedure discussed in section \ref{sec:infoSub}.
			\STATE Find the optimal power allocations, $P_{i}, i\in\{\mathcal{I}\cup\mathcal{E}\}$ according to cases 1-4 discussed in section \ref{sec:enerSub}.
  		\IF{(problem feasible)AND($\sum_{i\in\{\mathcal{I}\cup\mathcal{E}\}}P_i \leq P_f$)}
				\STATE Set $P_{UB} = P_f$; $P^{*}_{i}=P_{i}$; $f=1$.
			\ELSE  		
				\STATE Set $P_{LB} = P_f$.
  		\ENDIF
	\ENDWHILE
	\IF{$f=0$}
  		\STATE Problem \eqref{eq:PowerAllocation} is infeasible.
  \ENDIF
\end{algorithmic}  
\label{alg:PAFA}
\end{algorithm}

Note that by fixing the total power in the information constraint of the power allocation problem, a bisection procedure has to be followed to obtain the optimal total power. This procedure is summarized in algorithm \ref{alg:PAFA}, where the solution of the information and energy subproblems is discussed in sections \ref{sec:infoSub} and \ref{sec:enerSub}, respectively. The computational complexity of algorithm \ref{alg:PAFA}, is equal to $O(\log(1/\epsilon)N\log N)$, where $\log(1/\epsilon)$ is the estimated number of iterations of the bisection procedure with a stopping tolerance $\epsilon$, while $O(N\log N)$ is the computational complexity for solving the information subproblem.

\subsection{Information subproblem solution}
\label{sec:infoSub}

For given assignment and fixed total power in the information constraint, the information subproblem is expressed as:
\begin{subequations}
\label{eq:InformationSubproblem}
\begin{align}
\min &~ \sum_{i\in\mathcal{I}} P_i \label{eq:InformationSubproblema}\\
\text{s.t.}\;\;&\sum_{i\in\mathcal{I}} \log \left(1+\frac{P_i \lambda_i}{1+\sigma_{\epsilon}^2 P_f}  \right)\geq C_I, \label{eq:InformationSubproblemb}\\
&0 \leq P_i \leq P_{max},~i\in\mathcal{I}. \label{eq:InformationSubproblemd}
\end{align}
\end{subequations}
The solution of problem \eqref{eq:InformationSubproblem} is summarized in theorem 2. 
\begin{theorem}
Let $\theta_i =(1+\sigma_{\epsilon}^{2}P_{f})/\lambda_{i}$. The solution of the information subproblem \eqref{eq:InformationSubproblem}, if the problem is feasible ($\text{i.e.,} \sum_{i\in\mathcal{I}} \log (1+P_{max}/\theta_i)\geq C_I$),  is given by:
\begin{equation}
\label{eq:PinfoOptimal}
P_{i}^{*}=\left\{ \begin{array}{ll}
0, & \nu<\theta_i \\
\nu-\theta_i, & \theta_i\leq \nu \leq \theta_i+P_{max}\\
P_{max}, & \text{otherwise},
\end{array}\right.
\end{equation}
where the Lagrange multiplier $\nu$ is derived from: 
\begin{equation}
\label{eq:PinfoNuOptimal}
\nu = \rho^{(1/|\mathcal{I}_1|)}, \text{ and }\rho = \frac{2^{C_I}\prod_{i\in\mathcal{I}_1}(\theta_i)}{\prod_{i\in\mathcal{I}_2}(1+P_{max}/\theta_i)}.
\end{equation}
Sets $\mathcal{I}_1$ and $\mathcal{I}_2$ are defined as $\mathcal{I}_1=\{i:0<P_{i}^{*}<P_{max}\}$ and  $\mathcal{I}_2=\{i:P_{i}^{*}=P_{max}\}$. 
\end{theorem}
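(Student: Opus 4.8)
The plan is to solve the information subproblem \eqref{eq:InformationSubproblem} through its Karush--Kuhn--Tucker (KKT) conditions. First I would note that the program is convex: the objective $\sum_{i\in\mathcal{I}}P_i$ is linear, each term $\log(1+P_i/\theta_i)$ is concave in $P_i$ so the feasible set carved out by \eqref{eq:InformationSubproblemb} is convex, and the box constraints \eqref{eq:InformationSubproblemd} are linear. Hence the KKT conditions are necessary and sufficient, and it suffices to exhibit primal variables of the stated form together with multipliers that satisfy them.

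Next I would form the Lagrangian, attaching a multiplier $\nu\ge0$ to the rate constraint (written as $C_I-\sum_{i\in\mathcal{I}}\log(1+P_i/\theta_i)\le0$) and multipliers $\mu_i,\gamma_i\ge0$ to the box constraints $-P_i\le0$ and $P_i-P_{max}\le0$. Stationarity in $P_i$ gives
\begin{equation}
1-\frac{\nu}{\theta_i+P_i}-\mu_i+\gamma_i=0,\qquad i\in\mathcal{I}.
\end{equation}
A three-way case analysis driven by complementary slackness then produces \eqref{eq:PinfoOptimal}: when $0<P_i<P_{max}$ both $\mu_i=\gamma_i=0$, so $\theta_i+P_i=\nu$, i.e. $P_i=\nu-\theta_i$, which is interior precisely for $\theta_i<\nu<\theta_i+P_{max}$ (the endpoints coinciding with the neighbouring cases); when $P_i=0$, nonnegativity of $\mu_i=1-\nu/\theta_i$ forces $\nu\le\theta_i$; and when $P_i=P_{max}$, nonnegativity of $\gamma_i=\nu/(\theta_i+P_{max})-1$ forces $\nu\ge\theta_i+P_{max}$. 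I would also argue that the rate constraint is active, i.e. $\nu>0$: since the objective strictly decreases when any $P_i$ decreases, a slack constraint would allow lowering some $P_i$ while preserving feasibility, contradicting optimality whenever $C_I>0$.

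To determine $\nu$ I would substitute the optimal values into the active constraint $\sum_{i\in\mathcal{I}}\log(1+P_i^*/\theta_i)=C_I$. For interior channels $i\in\mathcal{I}_1$ one has $1+P_i^*/\theta_i=\nu/\theta_i$, for saturated channels $i\in\mathcal{I}_2$ one has $1+P_i^*/\theta_i=1+P_{max}/\theta_i$, and zero-power channels contribute $\log 1=0$. Collecting the terms inside a single logarithm and solving yields
\begin{equation}
\nu^{|\mathcal{I}_1|}=\frac{2^{C_I}\prod_{i\in\mathcal{I}_1}\theta_i}{\prod_{i\in\mathcal{I}_2}(1+P_{max}/\theta_i)}=\rho,
\end{equation}
which is exactly \eqref{eq:PinfoNuOptimal}.

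The hard part will be the self-consistency of the partition $\{\mathcal{I}_1,\mathcal{I}_2\}$: these sets are defined by comparing the water level $\nu$ against the thresholds $\theta_i$ and $\theta_i+P_{max}$, yet $\nu$ is itself computed from those very sets. I would resolve this by observing that the achieved rate $\sum_{i\in\mathcal{I}}\log(1+P_i^*(\nu)/\theta_i)$ is continuous and strictly increasing in $\nu$ over the range where at least one channel is unsaturated, sweeping from $0$ up to $\sum_{i\in\mathcal{I}}\log(1+P_{max}/\theta_i)$. Therefore, under the feasibility condition $\sum_{i\in\mathcal{I}}\log(1+P_{max}/\theta_i)\ge C_I$ stated in the theorem, a unique $\nu$ attaining $C_I$ exists, and sorting the channels by $\theta_i$ identifies the consistent partition; this sorting accounts for the $O(N\log N)$ cost quoted for the subproblem.
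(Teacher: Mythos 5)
Your proposal is correct and follows essentially the same route as the paper's Appendix A: convexity plus KKT, the three-way complementary-slackness case analysis yielding \eqref{eq:PinfoOptimal}, activity of the rate constraint, and substitution into the active constraint to obtain \eqref{eq:PinfoNuOptimal}. The only addition is your closing remark on the self-consistency of the partition via monotonicity of the achieved rate in $\nu$, which the paper defers to the algorithmic discussion following Theorem 2 rather than including in the proof itself.
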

\begin{proof}
The proof is given in Appendix A.
\end{proof}

Theorem 2 indicates that the computation of the optimal power allocation can be easily obtained from \eqref{eq:PinfoOptimal}, provided that the value of $\nu$ satisfying  \eqref{eq:PinfoNuOptimal} is found.  This can be achieved if the eigenchannels in sets $\mathcal{I}_1$ and $\mathcal{I}_2$ are found. Towards this direction, first we need to sort the $2|\mathcal{I}|$ values $\theta_i$ and $\theta_i + P_{max}$ in ascending order; let us assume that this order is $q_1 \leq q_2 \leq$...$\leq q_{2|\mathcal{I}|}$. If $\nu\in[q_j,q_{j+1}]$, sets $\mathcal{I}_1$ and $\mathcal{I}_2$ can be constructed and a candidate value, $\nu_c$, can be computed from \eqref{eq:PinfoNuOptimal}; in case $\nu_c\in[q_j,q_{j+1}]$ then it is optimal, otherwise the subsequent range of values needs to be examined, i.e., $[q_{j+1},q_{j+2}]$. For example, if $|\mathcal{I}|=2$, $\theta_1=1$, $\theta_2=2$, $P_{max}=0.5$, and $\nu=2.3$, with $q_i=\{1,1.5,2,2.5\}$, then, according to \eqref{eq:PinfoNuOptimal}, it will be true that $\mathcal{I}_1=\{2\}$ and  $\mathcal{I}_2=\{1\}$, as $\nu - \theta_1 > P_{max}$ and $0 < \nu - \theta_2 < P_{max}$. 

A computationally efficient method to obtain the optimal value of $\nu$ is to sequentially keep updating $\rho$ and $r=|\mathcal{I}_1|$ for cheap computation of $\nu_c$, while examining new regions. When examining region $[q_j,q_{j+1}]$, two cases can occur: (a) addition of some eigenchannel, say $i$, into set $\mathcal{I}_1$ (when $q_j = \theta_i$), and (b) addition of eigenchannel $i$ into set $\mathcal{I}_2$ (when $q_j = \theta_i+P_{max}$). The first case,  requires addition of element $i$ into set $\mathcal{I}_1$ so that $r$ has to be increased by one, and $\rho$ has to be multiplied by $q_j=\theta_i$. The second case,  requires deletion of eigenchannel $i$ from set $\mathcal{I}_1$ and addition into  $\mathcal{I}_2$ so that $r$ has to be decreased by one; updating $\rho$ requires division by $\theta_i$, due to the deletion of eigenchannel $i$ from set $\mathcal{I}_1$, and division by  $(1+P_{max}/\theta_i)$, due to addition of element $i$ into $\mathcal{I}_2$, yielding division by $\theta_i(1+P_{max}/\theta_i)=\theta_i+P_{max}=q_j$. To summarize, starting from $\rho=2^{C_I}$ and $r=0$ the two cases require setting $r=r+1$, $\rho = \rho q_j$, and $r=r-1$, $\rho = \rho/ q_j$, respectively; once $r$ and $\rho$ are updated, we compute $\nu_c=\rho^{1/r}$ and check to see whether $\nu_c\in[q_j,q_{j+1}]$ in which case we stop, otherwise we examine the subsequent range of values. The complexity of solving the information subproblem is $O(N\log(N))$, due to the sorting of the $q_i$ values.

\subsection{EH subproblem solution}
\label{sec:enerSub}

Once the information subproblem is solved, the optimal solution of \eqref{eq:PowerAllocation} is obtained by solving the \emph{EH subproblem}
\begin{subequations}
\label{eq:EHsubProblem}
\begin{align}
\min~ & \sum_{i\in\mathcal{E}} P_i \label{eq:EHsubProblema}\\
&\sum_{i\in \mathcal{E}} \big(P_i \lambda_i+\sigma_{\epsilon}^2\big(\sum_{j\in\mathcal{E}}P_j + P_I\big)\big)\geq C_{EH}, \label{eq:EHsubProblemb} \\
&0 \leq P_i \leq P_{max},~i\in\mathcal{E}, \label{eq:EHsubProblemc}
\end{align}
\end{subequations}
where $P_I=\sum_{i\in\mathcal{I}}P^*_{i}$. The solution of \eqref{eq:EHsubProblem} falls within the following four cases:

\textbf{Case 1:} The minimum power allocated for the information eigenchannels, $P_I$ suffices to satisfy the EH constraint, i.e., $|\mathcal{E}|P_I \sigma_{\epsilon}^{2} \geq C_{EH}$. Hence, no power needs to be allocated to EH eigenchannels.

\textbf{Case 2:} The minimum power allocated for the information eigenchannels and the total power allocated for the EH eigenchannels suffices to satisfy the EH constraint, i.e., $P_{max}\sum_{i\in\mathcal{E}}\lambda_i + |\mathcal{E}|(P_I + |\mathcal{E}|P_{max})\sigma_{\epsilon}^{2} \geq C_{EH}$. Because the EH constraint \eqref{eq:PowerAllocationc} is linear, power allocation is performed by sequentially filling EH eigenchannels, starting from the one with the largest eigenvalue to the one with the smallest eigenvalue, until the EH constraint is satisfied.  

\textbf{Case 3:} The total power allocated to the information and EH eigenchannels suffices to satisfy the constraints, i.e., 	$|\mathcal{E}|N P_{max}\sigma_{\epsilon}^{2} + P_{max}\sum_{i\in\mathcal{E}}\lambda_i \geq C_{EH}$. In this case all EH eigenchannels are allocated maximum power, i.e., $P_{i_2}=P_{max}$, $i_2\in\mathcal{E}$, and extra power is arbitrarily added to information channels until the EH constraint is satisfied.

\textbf{Case 4:} The total power allocated to the information and EH eigenchannels is not sufficient to satisfy either the information or the EH constraint, hence the problem is infeasible. 

Note that the above four cases are examined sequentially until one of them is satisfied, yielding the optimal solution to \eqref{eq:PowerAllocation}. The complexity of solving the EH subproblem is $O(N)$.

\section{Low-complexity heuristic}
\label{sec:heur}

In general, the solution methods proposed for problem \eqref{eq:ProblemGeneral} in Sections \ref{sec:MISOCP} and \ref{sec:MILP} have exponential complexity due to the nonlinear and combinatorial nature of the problem, and hence not suitable for real-time execution. In this section we propose a polynomial complexity heuristic that provides suboptimal results, but is suitable for real-time execution. 

The heuristic is based on the observation that one constraint usually dominates over the other, acquiring the ``best'' eigenchannels, i.e., those with the largest eigenvalues. If the information constraint is dominant, the first $k$ consecutive eigenchannels will be assigned for information decoding and the rest for EH, where $k$ is a value to be determined. Nonetheless, if the EH constraint is dominant, then apart from the ``best'' eigenchannels, the ``worse'' eigenchannels could be given for EH, with power only allocated in the best ones. The reason is that empty eigenchannels ($P_i=0$) are only useful for EH. Note that in both cases, information eigenchannels remain grouped, and for this reason we call the heuristic group eigenchannel assignment (GEA).

\begin{algorithm}[t]
	\caption{\textbf{: Group Eigenchannel Assignment} }
	\begin{algorithmic} [1] 
 	\STATE \textbf{Init.} Set $P=\infty$. 
 	\STATE Compute $P^{inf}_i$, $i\in\mathcal{N}$, $\{\mathcal{I}^{inf},~\mathcal{E}^{inf}\}$,  using algorithm \ref{alg:OptPinf}. 
	\IF{\text{all}($P^{inf}_{i} \leq P_{max}$)}   	
		\STATE $P_i^* = P_i^{inf}, i\in\mathcal{N}$; $\mathcal{I} = \mathcal{I}^{inf}$; $\mathcal{E} = \mathcal{E}^{inf}$.
	\ELSE
  	\FOR{$i\in\mathcal{N}$}   
				  \STATE Compute the minimum number of information eigenchannels $N_{i}^{min}$, with $\mathcal{I} = \{i,...,N\}$.
			  	\FOR{$j = i+N_{i}^{min} - 1,...,N$}
			 			\STATE Set $\mathcal{I}_c = \{i,...,j\}$.
			 			\STATE Set $\mathcal{E}_c = \{1,...,i-1,j+1,...,N\}$.
						\STATE Obtain a lower bound to the total power, $P^{inf}_{LB}=(y-1)/\sigma_{\epsilon}^{2}$, where $y$ is computed from \eqref{eq:y} according to theorem 1. 
						\IF{($P^{inf}_{LB}<P$)}   
							\STATE  Solve problem \eqref{eq:PowerAllocation} to obtain $P_i^{OPA}, i\in \mathcal{N}$. 
							\IF{($\sum_i P_i^{OPA}<P$)}
								\STATE Store current solution: $P = \sum_i P_i^{OPA}$;\\
								$P_i^* = P_i^{OPA}, i\in\mathcal{N}$; $\mathcal{I} = \mathcal{I}_c$; $\mathcal{E} = \mathcal{E}_c$.
							\ENDIF
						\ENDIF
					\ENDFOR
		\ENDFOR
		\ENDIF
\end{algorithmic}  
\label{alg:GEA}
\end{algorithm}

GEA heuristic is outlined in algorithm \ref{alg:GEA}. First, algorithm \ref{alg:OptPinf} is examined as a way to optimally solve cases of low transmitted power, i.e. with $P_i^*\leq P_{max}$, $i\in\mathcal{N}$. If that is not the case, GEA implements the main idea of assigning information eigenchannels in a group, as well as two strategies that significantly reduce the total number of power allocation problems that need to be solved. The first strategy estimates the minimum number of eigenchannels required to satisfy the information constraint, $N_{i}^{min}$, given that $i$ is the largest information eigenchannel.  $N_{i}^{min}$ is the first number that satisfies the information constraint with $\sigma_{\epsilon}^2=0$, i.e.,  $\sum_{m=i}^{N_{i}^{min}} \log(1+P_{max}\lambda_m)>C_I$. The second strategy solves the fixed assignment problem for $P_{max}=\infty$ to obtain a lower bound to the total power, $P^{inf}_{LB}$, according to theorem 1, so that the standard fixed assignment problem is solved only when the currently best solution $P$ is less than $P^{inf}_{LB}$. Note that if any subproblem is infeasible the returned total power is equal to infinity. 

GEA involves $N(N-1)/2$ iterations to account all considered eigenchannel combinations, while the maximum complexity for each assignment is $O(N\log(N)\log(1/\epsilon))$ which is the complexity to solve \eqref{eq:PowerAllocation}. Note that neither of the introduced strategies increase the complexity of the algorithm as the complexity of solving \eqref{eq:PowerAllocation} is larger. Hence, the computational complexity of the algorithm is $O(N^3\log(N)\log(1/\epsilon))$.


\section{Numerical results}
\label{sec:nume}

\begin{figure*}[t]
\centering
\subfigure[]{
  \includegraphics[width=0.47\linewidth]{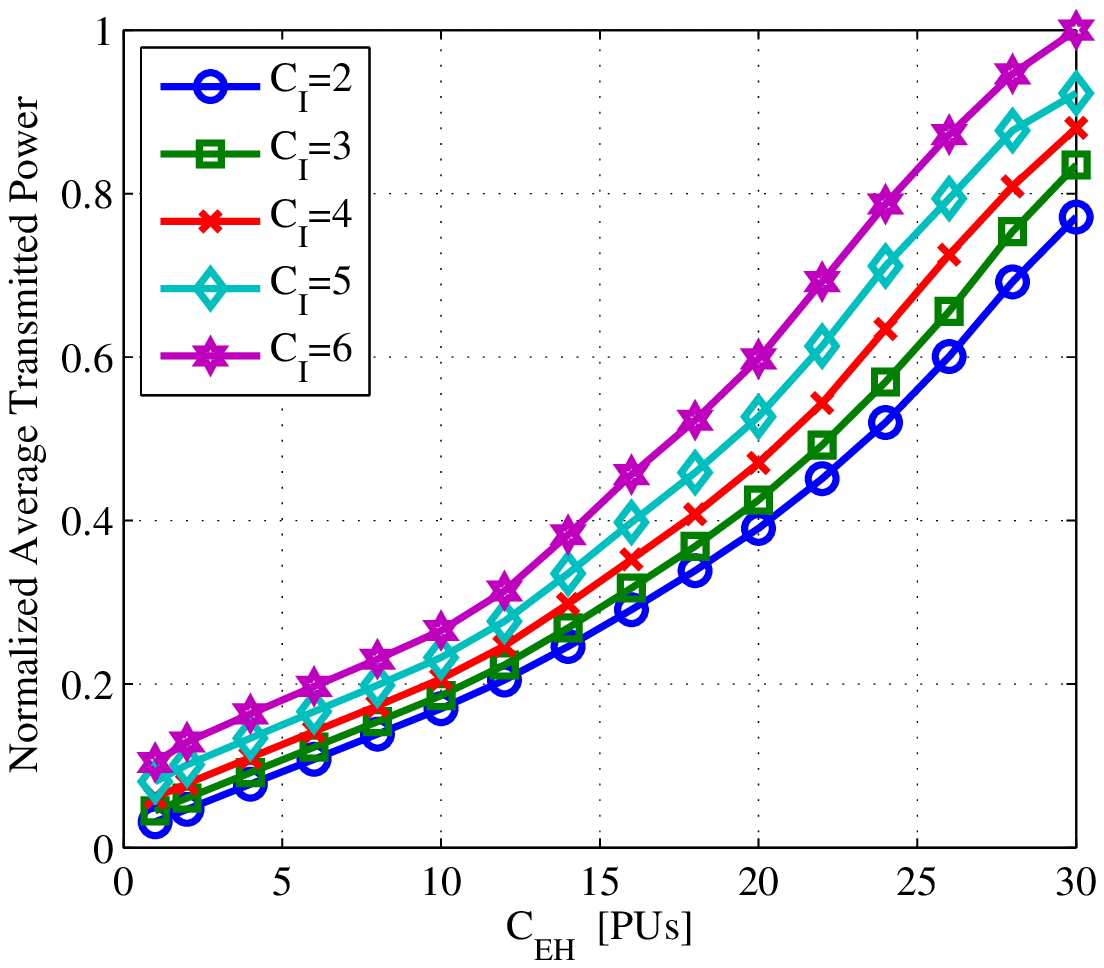}
	\label{fig:fig1}
 }
\subfigure[]{
  \includegraphics[width=0.47\linewidth]{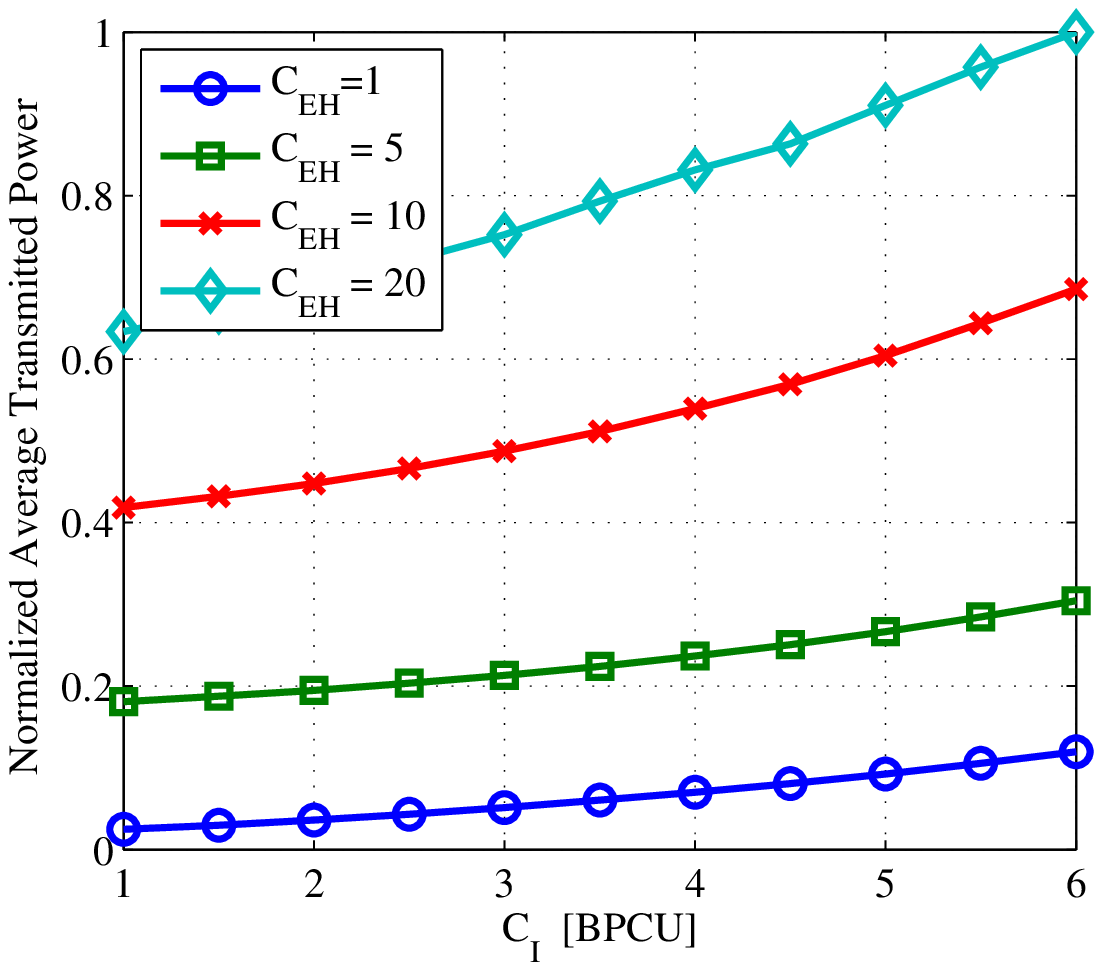}
	\label{fig:fig2}
 }
\subfigure[]{
  \includegraphics[width=0.47\linewidth]{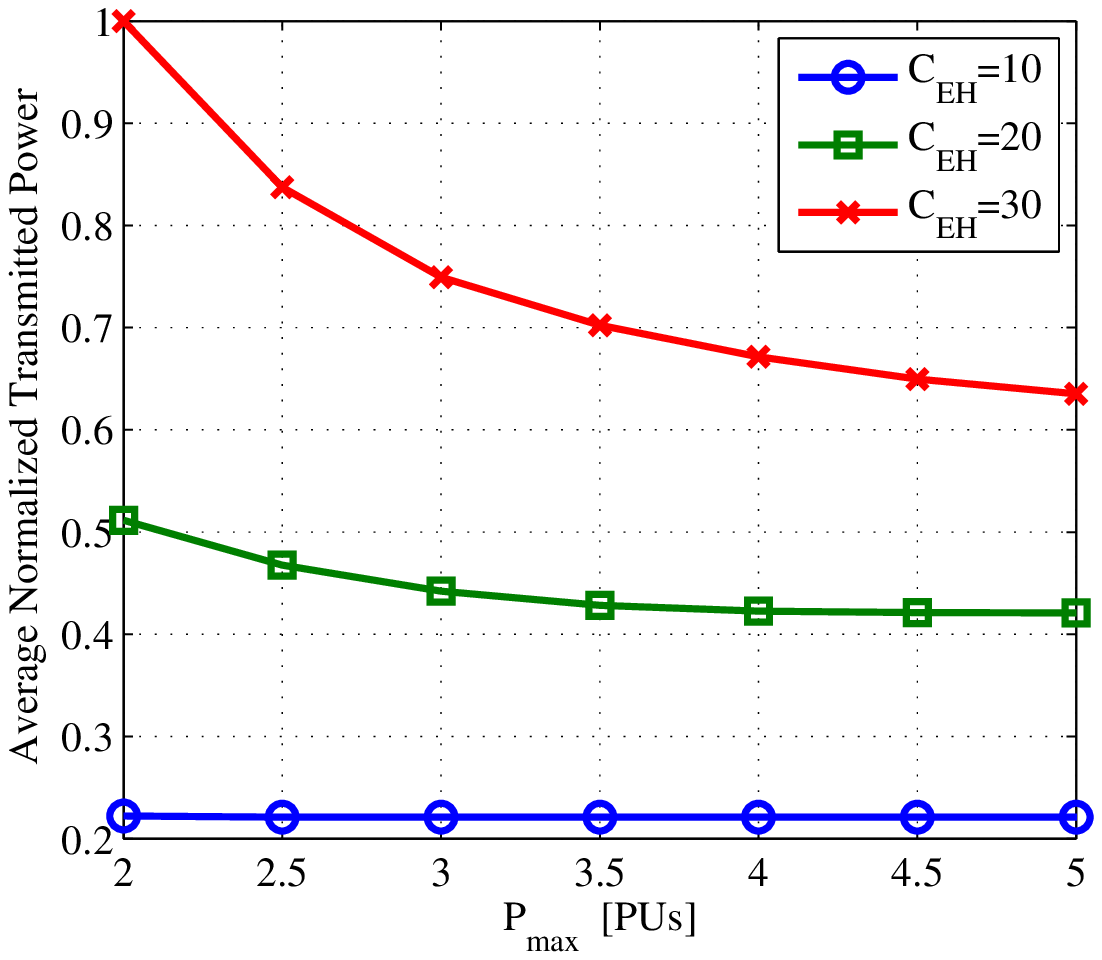}
	\label{fig:fig3}
 }
\subfigure[]{
  \includegraphics[width=0.47\linewidth]{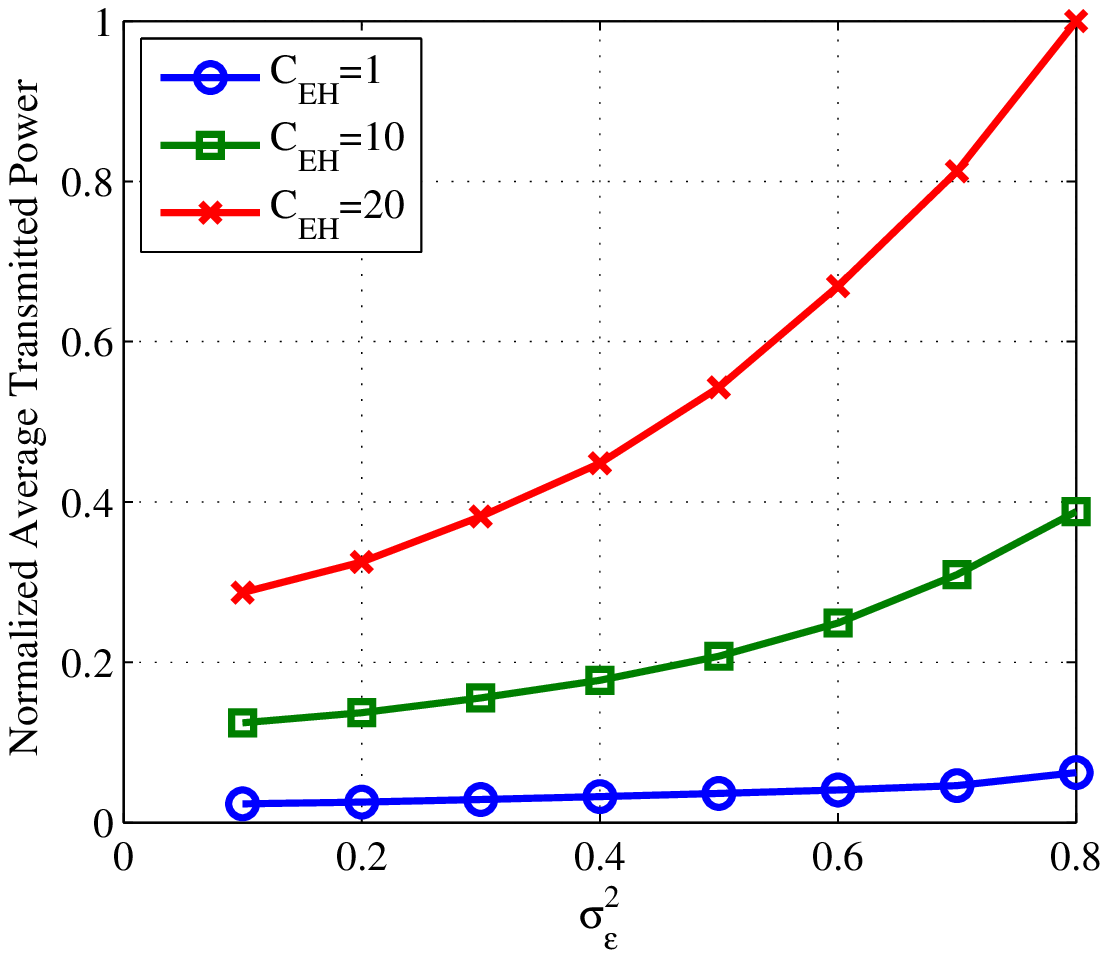}
	\label{fig:fig4}
 }
\vspace{-0.3cm}
\caption{Effect of different parameters on the normalized average transmitted power: (a) $C_{EH}$, (b) $C_{I}$, (c) $P_{max}$, and (d) $\sigma^2_{\epsilon}$.} 
\end{figure*}

Computer simulations are carried-out in order to evaluate the performance of the developed algorithms. Since the scope of the paper is to highlight the theoretical idea of SWIPT in point-to-point MIMO setup, only small scale fading is considered for the wireless medium. Specifically, for each different experimental configuration, the results are averaged over several randomly generated instances of $N_R\times N_T$ channel matrix in which the entries of the channel are independent and identically distributed ZMCSCG random variables with unit variance. Thus, the information rate is measured in Bits Per Channel Use (BPCU) and power is measured in Power Units (PUs). Mathematical modeling of MISOCP and MILP formulations was done using the Gurobi optimization solver \cite{Gurobi}.

\subsection{The effect of harvesting and information constraints}

To evaluate the effect of harvesting and information constraints, the optimal MISOCP solution in \eqref{eq:SOCP2} is invoked to solve several instances of the original problem under different configuration setups. In all instances, an $8 \times 8$ MIMO channel is considered (thus $N=8$) and the energy conversion efficiency factor is set to $\delta=0.3$. In each figure, the performance is shown in a relative manner by using the measure of the Normalized Average Transmission Power, which for each setup is defined as the ratio between the average of the objective value (of this specific setup) to the maximum average value of the objective over all the setups which are illustrated in the figure.

In Fig. \ref{fig:fig1} and Fig. \ref{fig:fig2}, we plot the average normalized transmitted power of MISOCP versus the RF EH constraint ($C_{EH}$) and information rate constraint ($C_I$), respectively, when $\sigma^2_{\epsilon}=0.1$ and $P_{max}=2$ PUs. Each point in the figures is the normalized average value of the total power consumption calculated over 1000 problem instances. As expected, the required transmitted power increases as $C_{EH}$ and/or $C_{I}$ take higher values because the constraints are harder to satisfy. One interesting observation that was extracted by analyzing MISOCP's eigenchannel assignment output, is that a significant portion of energy constraint $C_{EH}$ is covered from information eigenchannels by exploiting the interference term in \eqref{rate}.

In Fig. \ref{fig:fig3} we plot the impact of the peak power constraint $P_{max}$ on the performance of MISOCP for different RF EH thresholds and $C_I=10$ BPCU  and $\sigma^2_{\epsilon}=0.1$. As the bound $P_{max}$ increases, the inherent spatial diversity of the system is exploited in a more efficient way by the optimal power allocation policy. Thus, the performance of the system is improved, despite the fact that potentially more power is available to the system. For any given $C_{EH}$ and $C_I$ value, there is a point where any further increase in $P_{max}$ does not affect the performance of the system. This is because there is no eigenchannel in which the upper power bound constraint is active, \emph{e.g} for all eigenchannels the power allocated amount is strictly less than $P_{max}$. Clearly, this phenomenon emerges for larger values of $P_{max}$ as the constraints become harder.

In Fig. \ref{fig:fig4} we show the impact of channel estimation error $\sigma^2_\epsilon$ on the transmitted power for different RF EH thresholds $C_{EH}=\{2,10,20\}$. As can be seen the value of $\sigma^2_\epsilon$ significantly affects the total power consumption. For a fixed value of $C_{EH}$, a higher estimation error means that more power is required to guarantee the information rate constraint. The increase becomes more intense for higher values of the energy constraint.

\subsection{Comparison between MISOCP and MILP}

As it was concluded by extensive simulation results, the performance of MILP is perfectly matched to MISOCP in terms of average transmission power. This is an interesting result since the two approaches aim to solve the original problem from two completely different viewpoints. However, one important issue that should be taken into consideration is the execution time of each solution algorithm. In Fig. \ref{fig:figTime}, the execution time of MILP and MISOCP with respect to the size of the problem is shown, \emph{e.g} the number of eigenchannels, for two different setup configurations. The bound for the approximation error in MILP was set equal to $e_{max}=10^{-3}$. As can be seen, the execution of MILP is more robust compared to the execution time of MISOCP which becomes non-practical as $N$ increases. Having algorithms that provide robust execution times for large $N$, such as the MILP algorithm, is particularly important in obtaining the optimal solution in massive MIMO configurations where hundreds of antenna elements are present.

\begin{figure}[t]
\centering
\includegraphics[width=\linewidth]{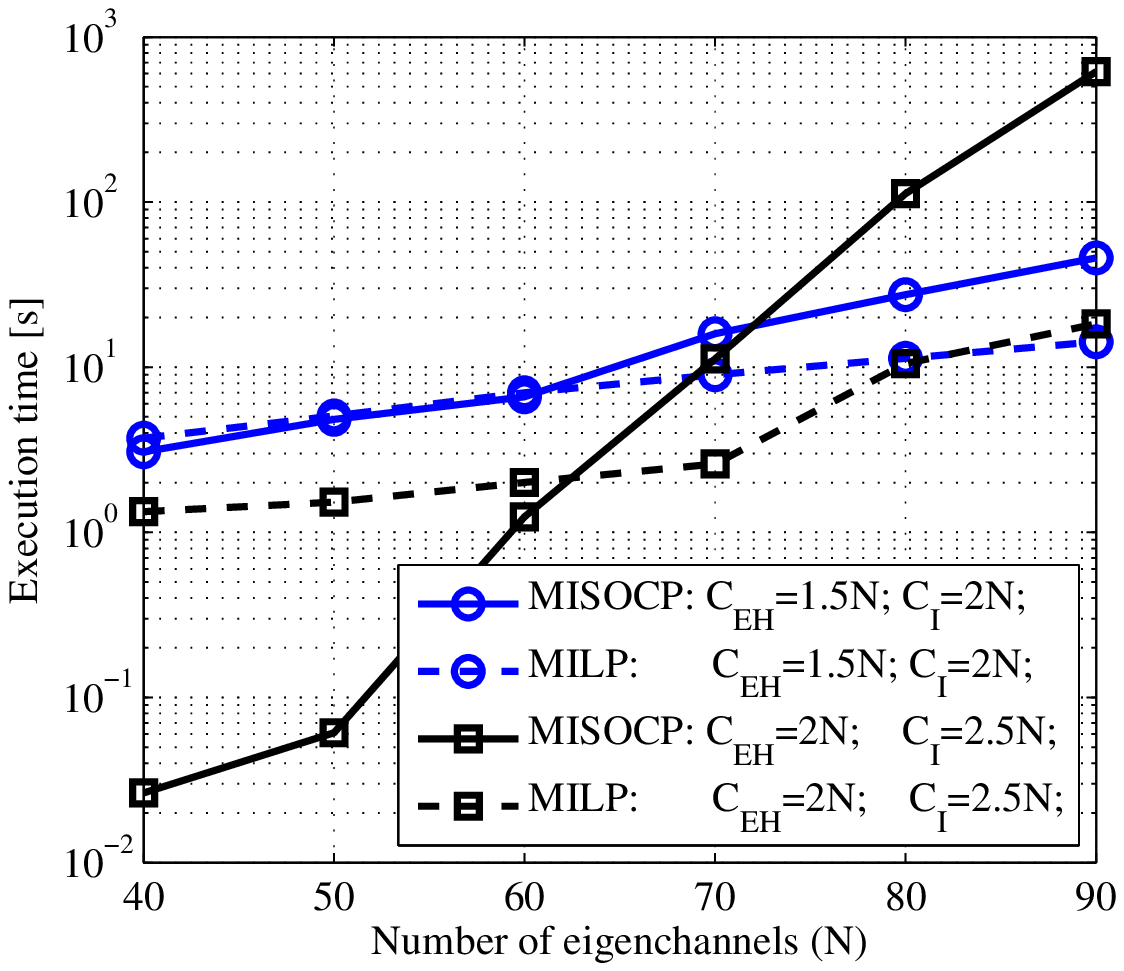}
\vspace{-0.5cm}
\caption{Execution time versus number of eigenchannels for different RF EH and data constraints.}
\label{fig:figTime}
\end{figure}

\subsection{Comparison between MISOCP and GEA}

In Fig. \ref{fig:fig5}, the performance of the Group Eigenchannel Assignment (GEA) algorithm and MISOCP is compared by averaging 10000 problem instances. An $8 \times 8$ MIMO channel is considered with $P_{max}=2$ PUs, $\sigma^2_\epsilon=0.1$ and several different values $C_{EH}$ and $C_I$. As can be seen, GEA has excellent performance, achieving results within 5\% from the optimal MISOCP solutions for all the examined configuration setups. This is a quite significant result if we take into consideration that GEA is a low-complexity algorithm, while MISOCP has exponential complexity. The small gap in performance is because GEA cannot output an eigenchannel assignment where information eigenchannels are not sequential, which appear to be the optimal assignment in certain cases. In Fig. \ref{fig:fig6}, the feasibility of the two algorithms is compared against the $C_{EH}$ parameter for three different values of $C_{I}$. Again, a small gap exists in GEA's feasibility which becomes more notable when the RF EH and/or information constraints become more stringent. This observation is inline with the results in Fig. \ref{fig:fig5} where the performance of GEA deteriorates as $C_{EH}$ and $C_{I}$ take higher values.

\begin{figure*}[t]
\centering
\subfigure[]{
  \includegraphics[width=0.43\linewidth]{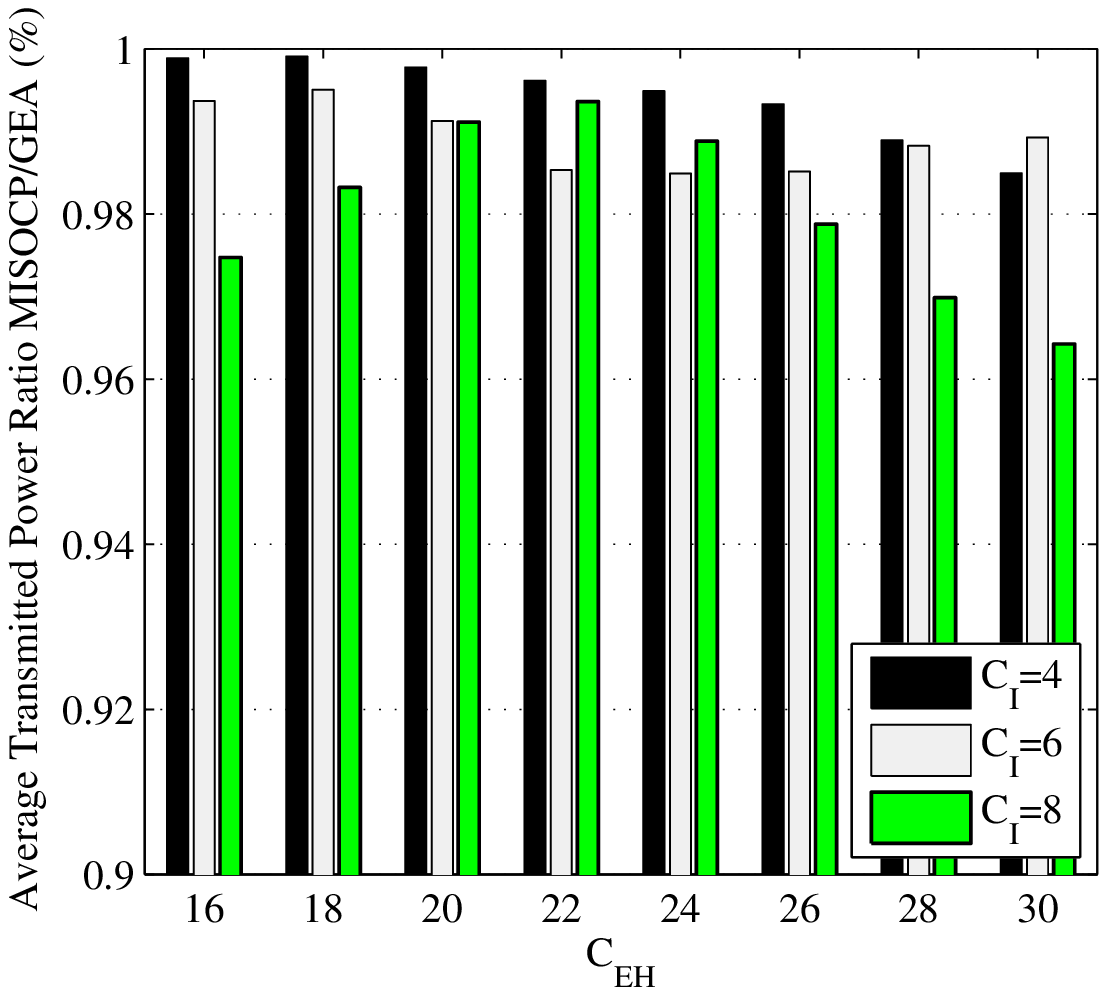}
	\label{fig:fig5}
 }
\subfigure[]{
  \includegraphics[width=0.43\linewidth]{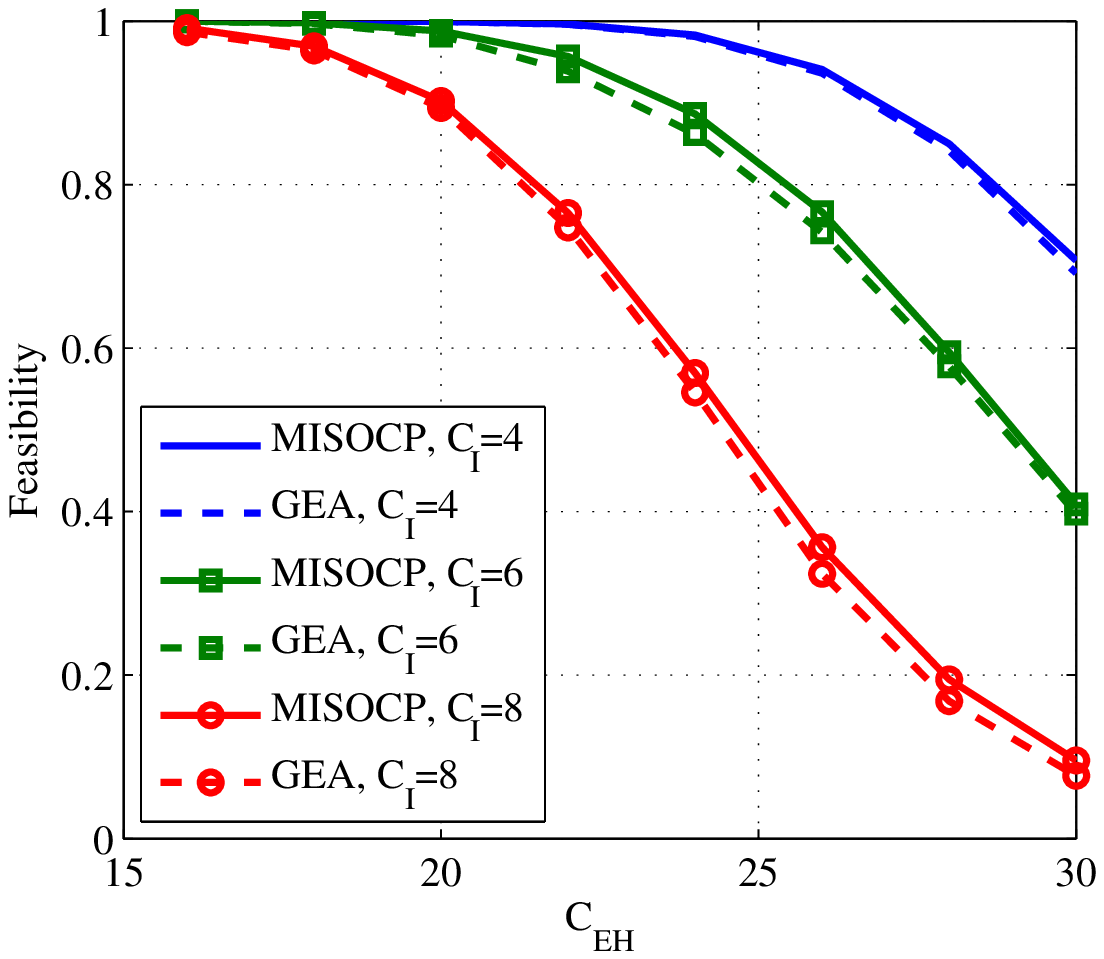}
	\label{fig:fig6}
 }
\vspace{-0.3cm}
\caption{Relative performance between MISOCP and GEA in terms of: (a) optimality, and (b) feasibility.} 
\end{figure*}

\section{Conclusions}

In this paper, we have theoretically investigated SWIPT in the spatial domain for a MIMO channel with RF EH capabilities. By using SVD decomposition for the wireless channel under an imperfect channel knowledge, the proposed technique uses the eigenchannels to convey either information or energy, with the goal of minimizing the overall transmitted power subject to information and energy  constraints. Although the examined problem is non-linear and combinatorial, MISOCP and MILP formulations have been developed that provide optimal and very close to optimal solutions, respectively.  Using Lagrange theory, a waterfilling-like procedure for the optimal power allocation is derived when the eigenchannel assignment is known. Finally, a polynomial complexity algorithm is presented  which produces a near-optimal solution for a wide range of parameter configurations. In this paper, we have laid the theoretical foundation of the spatial domain SWIPT which can be a promising technology especially for millimeter wave communication systems. In future work,  practical implementations of such systems need to be investigated.  

\section*{Appendix A: Proof of Theorem 2}

In order to prove the theorem, we consider the Karush-Kuhn-Tucker (KKT)  optimality conditions, as \eqref{eq:InformationSubproblem} is a convex optimization problem \cite{boyd}. Let $\mu_i^L$ and $\mu_i^U$ denote the Lagrange multipliers corresponding to the constraints $P_i\geq 0$ and $P_i\leq P_{max}$, $i\in\mathcal{I}$, respectively and $\nu$ the Lagrange multiplier corresponding to the information constraint \eqref{eq:InformationSubproblemb}. Then, the KKT conditions can be written as: 
\begin{subequations}
\label{eq:KKT}
\begin{align}
&\sum_{i\in\mathcal{I}} \log \left(1+P_i/\theta_i\right)\geq C_I, \label{eq:KKTa}\\
&0 \leq P_i \leq P_{max},~i\in\mathcal{I},  \label{eq:KKTb}\\
&\nu \left(\sum_{i\in\mathcal{I}} \log \left(1+P_i/\theta_i\right) - C_I\right)  = 0, \label{eq:KKTc}\\
&\mu_i^L P_i = 0, i\in\mathcal{I}, \label{eq:KKTd}\\
&\mu_i^U (P_i-P_{max}) = 0, i\in\mathcal{I}, \label{eq:KKTe}\\
&\nu\geq0,~\mu_i^L\ge0,~\mu_i^U\geq 0,~i\in\mathcal{I}, \label{eq:KKTf}\\
&1 - \nu/(\theta_i+P_i) - \mu_i^L + \mu_i^U = 0,~i\in\mathcal{I}, \label{eq:KKTg}
\end{align}
\end{subequations}
where \eqref{eq:KKTa}-\eqref{eq:KKTb} are the primal feasibility constraints, \eqref{eq:KKTc}-\eqref{eq:KKTe} are the complementary slackness conditions, \eqref{eq:KKTf} are the dual feasibility constraints, while \eqref{eq:KKTg} is obtained by setting the derivative of the Lagrange function with respect to $P_i$ to zero. Based on the KKT conditions \eqref{eq:KKT} a number of restrictions hold: 

\noindent \textbf{Restriction 1:} Because $C_I>0$, the minimization of $\sum_{i\in \mathcal{I}}P_i$ implies that constraint \eqref{eq:KKTa} is always satisfied with equality at the optimal solution; otherwise, some $P_i$ could be further reduced to achieve equality for the information constraint and also reduce the objective function value. Hence, \eqref{eq:KKTc} is always satisfied  and  $\nu\ge 0$. 

\noindent \textbf{Restriction 2:} Eqs. \eqref{eq:KKTd} and \eqref{eq:KKTe} imply that when $P_i=0$ it is true that $\mu_i^U = 0$, when $0<P_i<P_{max}$ it is true that $\mu_i^L = 0$ and $\mu_i^U = 0$, and when $P_i=P_{max}$ it is true that $\mu_i^U = 0$.

\noindent \textbf{Restriction 3:} According to the values of $\mu_i^L$ and $\mu_i^U$ we can identify three cases for the value of $\nu$ obtained from \eqref{eq:KKTg}:
\begin{enumerate}
	\item [3.1] Case $\mu_i^U=0$ and $\mu_i^U\geq 0$ ($P_i=0$): 	$\mu_i^L= 1-\nu/\theta_i \geq 0 \Rightarrow \nu \le \theta_i$.
	\item [3.2] Case $\mu_i^U=\mu_i^U=0$ ($0<P_i<P_{max}$): $\nu =\theta_i + P_i>0$.
	\item [3.3] Case $\mu_i^L=0$ and $\mu_i^U\ge0$ ($P_i=P_{max}$): $\nu \geq \theta_i+P_{max}>0$.
\end{enumerate}

Restriction 3, guarantees  that \eqref{eq:PinfoOptimal} is satisfied for all information eigenchannels and that $\nu>0$ if $P_i>0$, for some $i\in\mathcal{I}$, which is always true since $C_I>0$.

Notice that the contribution of eigenchannels with $P_i=0$ or $P_i=P_{max}$ is not depended on $\nu$ (it is equal to 0 and $\log(1+P_{max}/\theta_i)$, respectively), while for $0<P_i<P_{max}$,  the contribution is equal to $\log(\nu/\theta_i)$ (obtained from substitution of $P_i = \nu -\theta_i$ into the associated log term of the information constraint). Substituting all contributing channels into \eqref{eq:KKTa} yields:
\begin{equation*}
\sum_{i\in\mathcal{I}_1}\log(\nu/\theta_i) + 
\sum_{i\in\mathcal{I}_2}\log(1+ P_{max}/\theta_i) = C_I
\end{equation*}
where $\mathcal{I}_1$ and $\mathcal{I}_2$ are defined in Theorem 2. Rewriting the logarithmic terms in product form yields:
\begin{equation*}
\log\left\{\frac{\nu^{|\mathcal{I}_1|}}{\prod_{i\in\mathcal{I}_1}\theta_i}\prod_{i\in\mathcal{I}_2}(1+ P_{max}/\theta_i)\right\} = C_I
\end{equation*}
which after simple algebra yields \eqref{eq:PinfoNuOptimal}. This completes the proof.

\begin{IEEEbiography}[{\includegraphics[width=1in,height=1.25in,clip,keepaspectratio]{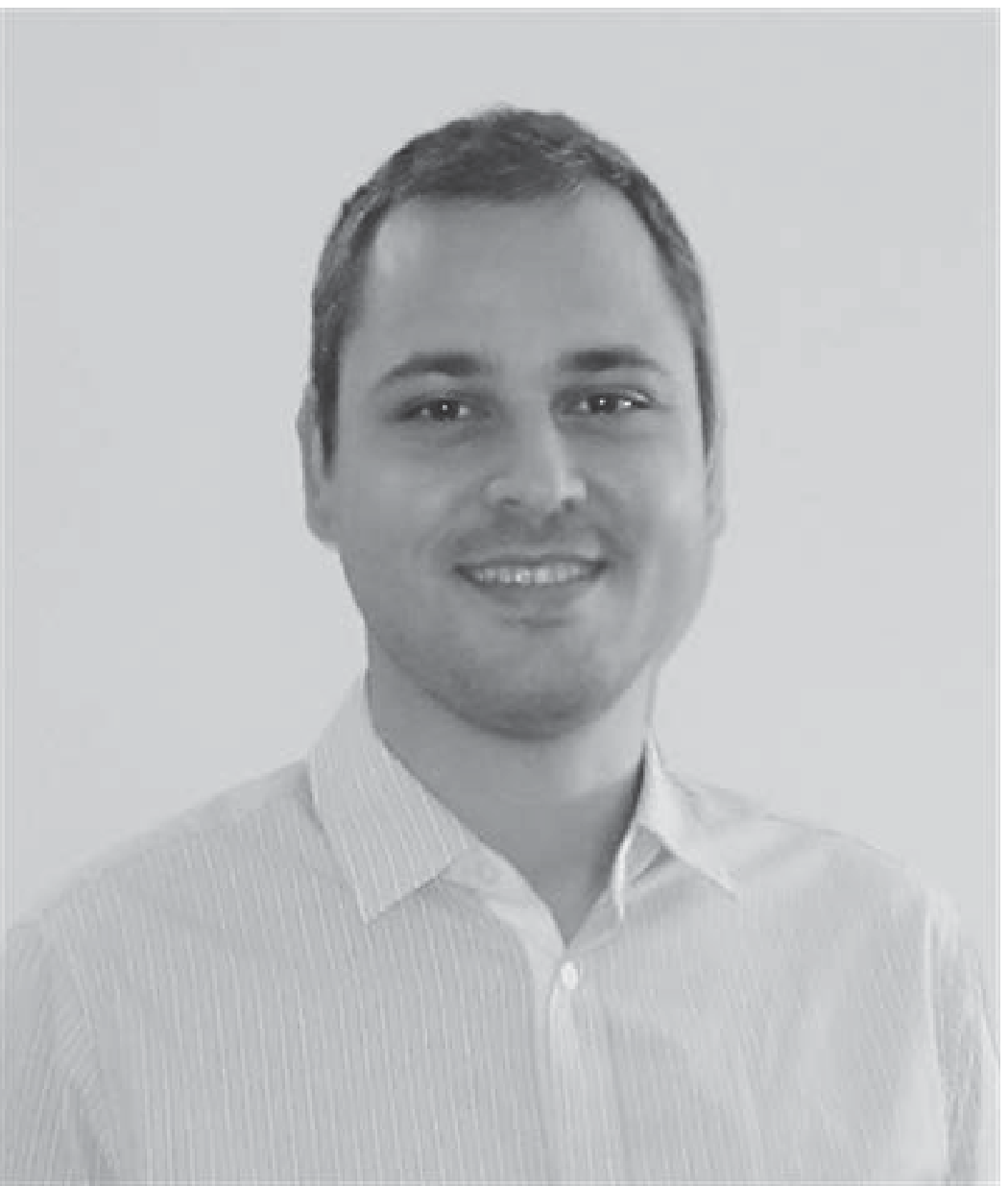}}]%
{Stelios Timotheou} (S'04-M'10) received a B.Sc. from the Electrical and Computer Engineering (ECE) School of the National Technical University of Athens, and an M.Sc. and Ph.D. from the Electrical and Electronic Engineering Department of Imperial College London. He is currently a Research Associate at the KIOS Research Center for Intelligent Systems and Networks of the University of Cyprus (UCY). In previous appointments, he was a Visiting Lecturer at the ECE Department of UCY, a Research Associate at the Computer Laboratory of the University of Cambridge and a Visiting Scholar at the Intelligent Transportation Systems Center \& Testbed, University of Toronto. His research focuses on the modelling and system-wide solution of problems in complex and uncertain environments that require real-time and close to optimal decisions by developing optimisation, machine learning and computational intelligence techniques. Application areas of his work include communication systems, intelligent transportation systems, disaster management and neural networks. He is a member of the IEEE and the ACM.
\end{IEEEbiography}

\begin{IEEEbiography}[{\includegraphics[width=1in,height=1.25in,clip,keepaspectratio]{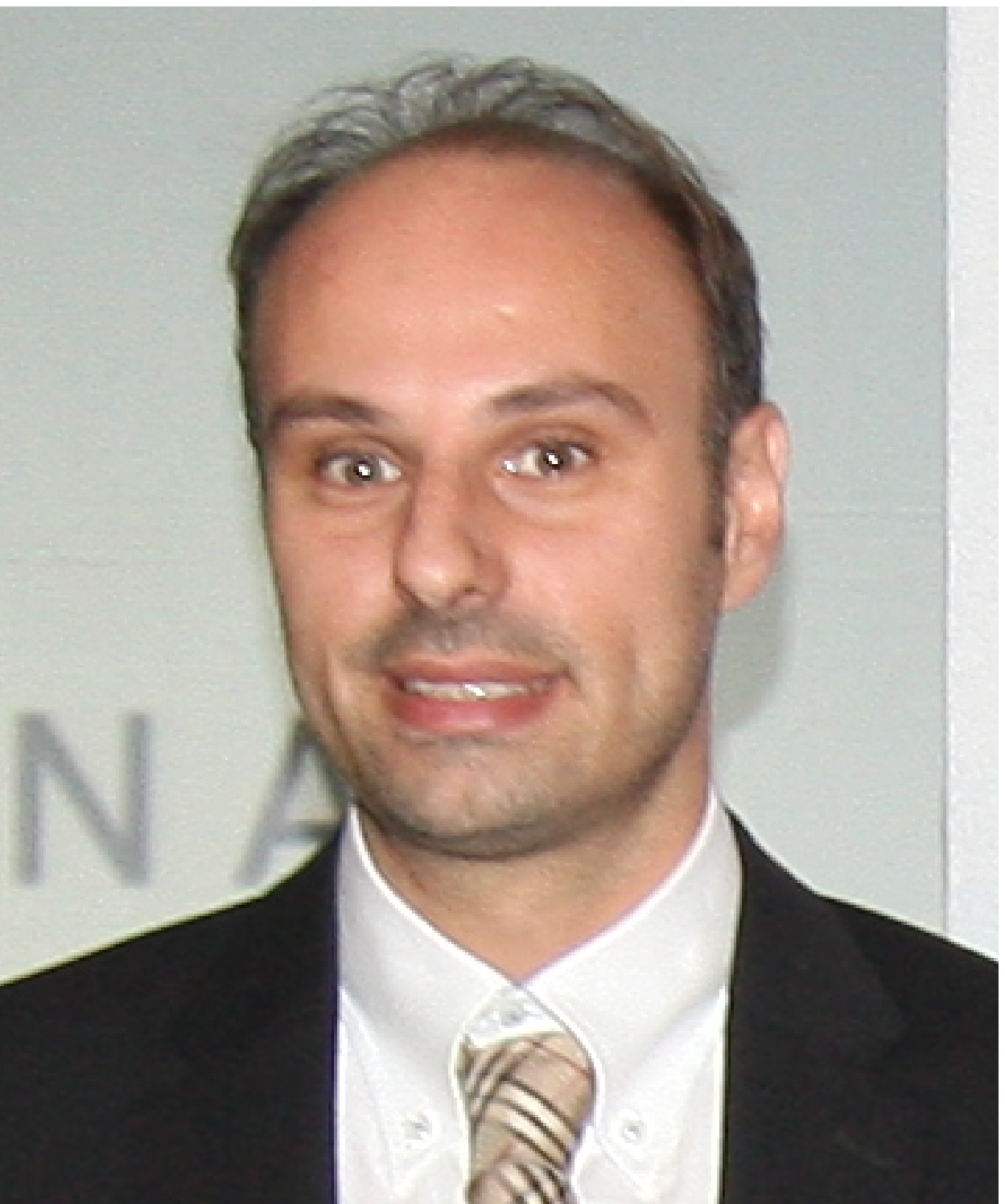}}]{Ioannis Krikidis} (S'03-M'07-SM'12) received the diploma in Computer Engineering from the Computer Engineering and Informatics Department (CEID) of the University of Patras, Greece, in 2000, and the M.Sc and Ph.D degrees from Ecole Nationale Sup\'erieure des T\'el\'ecommunications (ENST), Paris, France, in 2001 and 2005, respectively, all in electrical engineering. From 2006 to 2007 he worked, as a Post-Doctoral researcher, with ENST, Paris, France, and from 2007 to 2010 he was a Research Fellow in the School of Engineering and Electronics at the University of Edinburgh, Edinburgh, UK. He has held also research positions at the Department of Electrical Engineering, University of Notre Dame; the Department of Electrical and Computer Engineering, University of Maryland; the Interdisciplinary Centre for Security, Reliability and Trust, University of Luxembourg; and the Department of Electrical and Electronic Engineering, Niigata University, Japan. He is currently an Assistant Professor at the Department of Electrical and Computer Engineering, University of Cyprus, Nicosia, Cyprus. His current research interests include  information theory, wireless communications, cooperative communications, cognitive radio and secrecy communications.

Dr. Krikidis serves as an Associate Editor for the IEEE Transactions on Communications,  IEEE Transactions on Vehicular Technology, IEEE Wireless Communications Letters, and Wiley Transactions on Emerging Telecommunications Technologies.  He was the Technical Program Co-Chair for the IEEE International Symposium on Signal Processing and Information Technology 2013. He received an IEEE Communications Letters and IEEE Wireless Communications Letters exemplary reviewer certificate in 2012. He was the recipient of the {\it Research Award Young Researcher} from the Research Promotion Foundation, Cyprus, in 2013.
\end{IEEEbiography}

\begin{IEEEbiography}[{\includegraphics[width=1in,height=1.25in,clip,keepaspectratio]{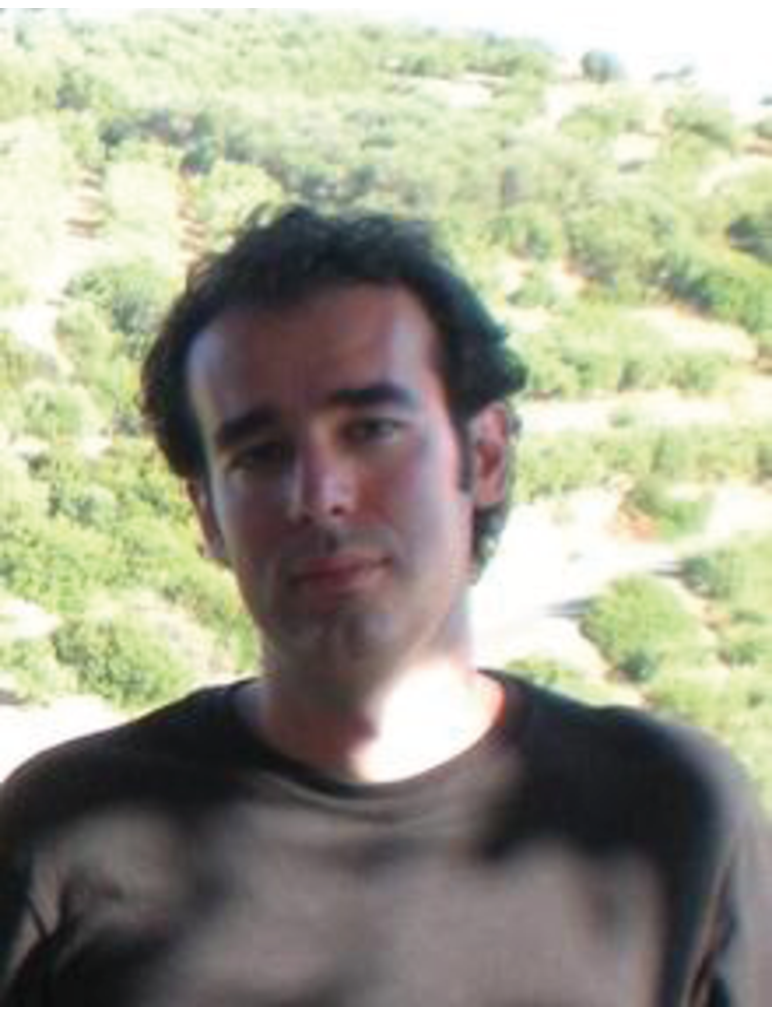}}]%
{Sotiris Karachontzitis} received his Diploma in computer engineering and informatics in 2004, and his M.Sc. degree in signal and image processing systems in 2008, both from the University of Patras, Greece. He is currently a Ph.D. student at the Computer Engineering and Informatics Department (CEID), University of Patras, Greece. His research interests mainly include resource allocation \& optimization in wireless communications, MIMO signal processing and secrecy communications. He is a student member of IEEE and a member of the Technical Chamber of Greece.
\end{IEEEbiography}

\begin{IEEEbiography}[{\includegraphics[width=1in,height=1.25in,clip,keepaspectratio]{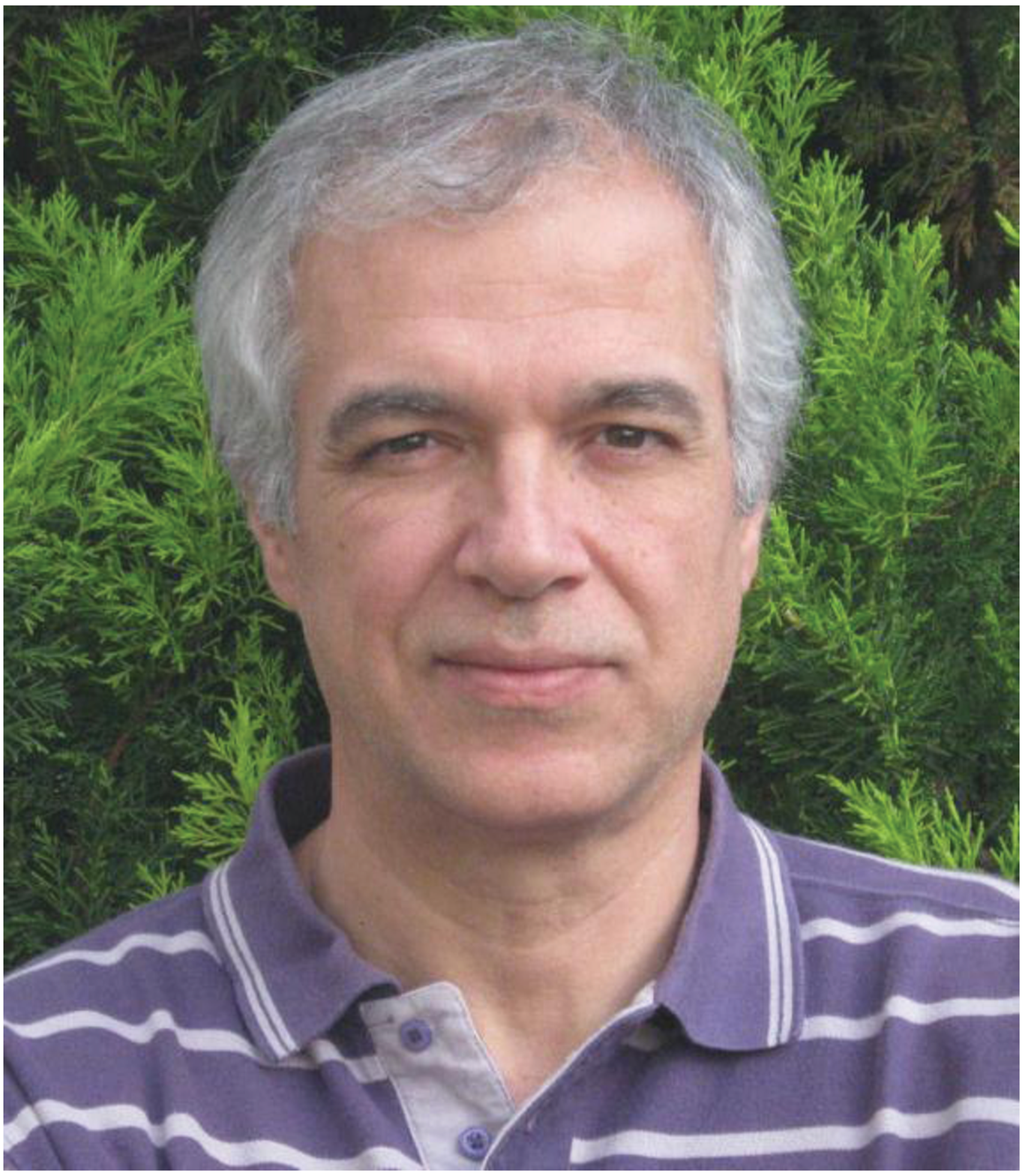}}]%
{Kostas Berberidis} (S'87-M'90-SM'07) received the Diploma degree in electrical engineering from DUTH, Greece, in 1985, and the Ph.D. degree in signal processing and communications from the University of Patras, Greece, in 1990. During 1991, he worked at the Signal Processing Laboratory of the National Defense Research Center. From 1992 to 1994 and from 1996 to 1997, he was a researcher at the Computer Technology Institute (CTI), Patras, Greece. In period 1994/95 he was a Postdoctoral Fellow at CCETT/CNET, Rennes, France. Since December 1997, he has been with the Computer Engineering and Informatics Department (CEID), University of Patras, where he is currently a Professor, and Head of the Signal Processing and Communications Laboratory.  Also, since 2008, he has been Director of the Signal Processing \& Communications Research Unit of the Computer Technology Institute and Press ``Diophantus''. His research interests include adaptive filtering, distributed processing, signal processing for communications, and wireless sensor networks. 

Prof. Berberidis has served or has been serving as a member of scientific and organizing committees of several international conferences, as Associate Editor for the IEEE Transactions on Signal Processing and the IEEE Signal Processing Letters, as a Guest Editor for the EURASIP JASP and as Associate Editor for the EURASIP Journal on Advances in Signal Processing. Also, he is a member of the Signal Processing Theory and Methods Technical Committee of the IEEE SPS and, since February 2010, he has been serving as Chair of the Greece Chapter of the IEEE Signal Processing Society. He is a Member of the Technical Chamber of Greece, a member of EURASIP, and a Senior Member of the IEEE. 
\end{IEEEbiography}

\end{document}